\documentclass[10pt,aps,prl,twocolumn,superscriptaddress]{revtex4-2}
\usepackage[english]{babel}
\usepackage{microtype}
\usepackage{ragged2e}
\usepackage[autostyle=true]{csquotes} 
\usepackage{fontawesome5}

\usepackage{orcidlink}
\usepackage{amsmath}
\usepackage{mathtools}
\usepackage{amsfonts}
\usepackage{amssymb}
\usepackage{dsfont}
\usepackage{nicefrac}
\usepackage{braket}
\usepackage{diffcoeff}
\usepackage{cancel}

\usepackage{zref-clever}
\zcsetup{
    cap=true, 
    nameinlink=false, 
}

\zcRefTypeSetup{equation}{
    abbrev=true, 
}

\zcRefTypeSetup{figure}{
    abbrev=true,
}

\newcommand{\cref}[1]{\zcref{#1}}
\newcommand{\Cref}[1]{\zcref[S]{#1}}

\usepackage{graphicx} 
\graphicspath{{./figures/}}

\makeatletter
\newcounter{panel}[figure]

\def\p@panel{\thefigure}

\makeatother

\usepackage[shortlabels]{enumitem}

\usepackage{keytheorems} 
\newkeytheorem{theorem}[
]
\newkeytheorem{proposition,lemma,corollary}[
    style=plain,
    sibling=theorem 
]
\newkeytheorem{definition}[
    style=definition,
    sibling=theorem
]
\newkeytheorem{remark,example}[
    style=remark,
    sibling=theorem
]
\newkeytheorem{hypothesis}[
    style=plain,
    sibling=theorem,
    name=Hypothesis,
    refname=hypothesis,
    Refname=Hypothesis,
]

\usepackage{algorithm}
\usepackage{algpseudocodex}

\usepackage{booktabs}
\usepackage{multirow}

\usepackage{appendix}

\usepackage{xcolor}
\definecolor{royalpurple}{rgb}{0.47, 0.32, 0.66}
\definecolor{dblue}  {RGB}{20,66,129}
\definecolor{ddblue} {RGB}{11,36,69}
\definecolor{jblue}  {RGB}{20,50,100}
\definecolor{cblue}{rgb}{0.16, 0.32, 0.75}
\definecolor{cred}{rgb}{0.7, 0.11, 0.11}

\usepackage{hyperref} 
\hypersetup{%
	colorlinks = true,
	linkcolor=cred,
	citecolor=cblue,
	urlcolor=black,
}

\usepackage{tikz}
\usetikzlibrary{babel}
\usetikzlibrary{decorations.pathreplacing,calc,shapes.geometric}
\usetikzlibrary{arrows.meta, positioning}
\usetikzlibrary{shapes.multipart}

\usepackage{comment}

\newcommand{\id}{\mathds{1}}

\DeclareMathOperator{\Imag}{Im}
\DeclareMathOperator{\mspan}{Span}
\DeclareMathOperator{\Ker}{Ker}
\DeclareMathOperator{\Ran}{Ran}

\DeclareMathOperator{\Mat}{Mat}

\DeclarePairedDelimiterX{\norm}[1]\lVert\rVert{
	\ifblank{#1}{\:\cdot\:}{#1}
}
\DeclarePairedDelimiterX{\opnorm}[1]\lVert\rVert{
	\ifblank{#1}{\:\cdot\:}{#1}
}

\newcommand{\iu}{\mathrm{i}\mkern1mu}
\newcommand{\e}{\mathrm{e}}

\newcommand{\nnum}{\mathbb{N}}
\newcommand{\rnum}{\mathbb{R}}
\newcommand{\cnum}{\mathbb{C}}

\newcommand{\hilbert}{\mathcal{H}}
\newcommand{\domain}{\mathcal{D}}
\newcommand{\schwartz}{\mathcal{S}}

\newcommand{\Jop}{\mathcal{J}}

\newcommand{\dspace}{\mathcal{N}}
\newcommand{\sphere}{\mathcal{B}}
\newcommand{\matupper}{\mathbb{U}}
\newcommand{\wronsk}{\mathcal{W}}

\newcommand{\esssa}{\textcolor{green!55!black}{\faCheck}}
\newcommand{\hasext}{\textcolor{orange!90!white}{\faInfoCircle}}
\newcommand{\noext}{\textcolor{red!75!black}{\faTimes}}

\begin{abstract}
    Hermitian polynomials in the quadrature operators $x$ and $p$ are the basic ingredients of continuous-variable quantum information, and increasingly exotic ones are now routinely
engineered in superconducting circuits and trapped ions. However, such a polynomial need not
define unique quantum dynamics: it may define exactly one, none, or an
infinite family of physically distinct ones. Such obstructions are invisible to algebraic arguments and can compromise finite-dimensional truncations, which may fail to converge or silently converge to the wrong dynamics. 
We develop a method, implemented
as an algorithm, which decides which case occurs---and how large the family
is---from the coefficients alone, with no functional-analytic input from the
user. For Gaussian operators the answer is always one; for non-Gaussian ones, a regime continuous-variable quantum computing cannot do without, the other outcomes occur already for simple and physically natural Hamiltonians. We apply the algorithm to operators
drawn from the recent literature.
\end{abstract}

\begin{document}

\title{When is a bosonic operator a Hamiltonian?}
\date{\today}

\author{Felix Fischer\,\orcidlink{0009-0001-3040-8184}}
\affiliation{Department of Physics, Friedrich-Alexander-Universität Erlangen-Nürnberg, Staudtstraße 7/B3, 91058 Erlangen, Germany}

\author{Hermann Schulz-Baldes\,\orcidlink{0000-0003-0304-4140}}
\affiliation{Department of Mathematics, Friedrich-Alexander-Universität Erlangen-Nürnberg, Cauerstraße 11, 91058 Erlangen, Germany}

\author{Daniel Burgarth\,\orcidlink{0000-0003-4063-1264}}
\affiliation{Department of Physics, Friedrich-Alexander-Universität Erlangen-Nürnberg, Staudtstraße 7/B3, 91058 Erlangen, Germany}

\author{Davide Lonigro\,\orcidlink{0000-0002-0792-8122}}
\affiliation{Department of Physics, Friedrich-Alexander-Universität Erlangen-Nürnberg, Staudtstraße 7/B3, 91058 Erlangen, Germany}

\maketitle

\begin{table}[t!]
    \centering
    \begin{tabular*}{\columnwidth}{@{\extracolsep{\fill}} p{5.7cm} p{1.7cm} c @{}}
    \toprule
    Operator & Used in: & Result \\
    \midrule
    $(a^\dagger)^3 + a^3$  & \multirow{2}{2.2cm}{\cite{bazavan-squeezingtrisqueezingquadsqueezing-2026,eriksson-universalcontrolbosonic-2024,deng-generatingcharacterizinggeneralized-2026,ashhab-propertiesdynamicsgeneralized-2025,menard-emissionphotonmultiplets-2022,chang-observationthreephotonspontaneous-2020,sutherland-universalhybridquantum-2021,saner-generatingarbitrarysuperpositions-2026,bu-mechanicaltrisqueezingcircuit-2026}} & \hasext \\
     $(a^\dagger)^4 + a^4$ & & \hasext \\
     $(a^\dagger)^2 a + a^\dagger a^2$ & \cite{leibfried-trappedionquantumsimulator-2002} & \hasext \\
     \midrule
     $\lambda (a^\dagger)^2 a^2$+$\eta((a^\dagger)^2+a^2)$  & \cite{grimm-stabilizationoperationkerrcat-2020,ruiz-twophotondrivenkerr-2023,karthik-steadystatespectralkissing-2026,hajr-highcoherencekerrcatqubit-2024,leghtas-confiningstatelight-2015,puri-stabilizedcatdriven-2019,baskov-exactamplitudesparametric-2025,frattini-observationpairwiselevel-2024} & \esssa \\
     $\lambda (a^\dagger)^2 a^2 + \kappa a^\dagger a+\eta((a^\dagger)^2+a^2)  $ & \cite{sivak-kerrfreethreewavemixing-2019,chavez-carlos-spectralkissingits-2023,reynoso-quantumtunnelinglevel-2023,marti-quantumsqueezingnonlinear-2024} & \esssa \\
      $\lambda (a^\dagger)^2 a^2+ \kappa a^\dagger a + \eta_1 a + \eta_2 a^2 + \eta_3 a^\dagger a^2+h.c. $ & \cite{yu-stabilizingkerrarbitrary-2024,puri-stabilizedcatdriven-2019,puri-engineeringquantumstates-2017} & \esssa \\
     \midrule
     $\lambda x^3$ & \cite{park-deterministicnonlinearphase-2018,kasri-nonlinearphasegates-2026,rakhubovsky-stroboscopichighordernonlinearity-2021,eriksson-universalcontrolbosonic-2024,junsheng-cubicphasegate-2026,fontaine-programminganharmonicpotentials-2026,hillmann-universalgateset-2020} & \esssa \\
     $\kappa a^\dagger a + \mu_1 x + \mu_2 x^2 + \mu_3 x^3 + \lambda x^4$, $\lambda>0$ & \cite{setter-characterizationnonlinearitiesmechanical-2019,zhu-classicalsymptomsquantum-1993,hillmann-universalgateset-2020} & \esssa \\
     $\kappa a^\dagger a + \mu_1 x + \mu_2 x^2 + \mu_3 x^3 - \lambda x^4$, $\lambda>0$ & \cite{zhu-classicalsymptomsquantum-1993,hillmann-universalgateset-2020,lu-resolvingfockstates-2023} & \hasext \\
     $p^2 + x^2 + \lambda x^3$ & \cite{rakhubovsky-stroboscopichighordernonlinearity-2021} & \hasext \\
     \midrule
   $\iu [(a^\dagger a)^2,x]$ & \multirow{4}{*}{\cite{lloyd-quantumcomputationcontinuous-1999}} & \hasext  \\
         $\iu [(a^\dagger a)^2,p]$ & & \hasext \\
         $[x,[(a^\dagger a)^2,\iu(a^\dagger)^2-\iu a^2]]$ & & \esssa \\
         $[p,[(a^\dagger a)^2,\iu(a^\dagger)^2-\iu a^2]]$ & & \esssa \\ 
    \midrule
         $x^2 p +xpx+px^2$ & \multirow{5}{*}{\cite{gordillo-hachuel-quantummetrologicaladvantage-2026,guo-metrologicalsensitivitygaussian-2025}} & \hasext \\ 
         $xp^2 + pxp + p^2 x$ & & \hasext \\ 
         $x^3 p +x^2px+xpx^2+px^3$ & & \noext \\ 
         $x p^3 +pxp^2+p^2 x p + p^3 x$ & & \noext \\ 
         $x^2 p^2 + xpxp + xp^2 x + px^2p+pxpx+p^2 x^2$ & & \hasext \\ 
    \midrule
     $\kappa_1 a^\dagger a-\frac{\kappa_2}{2} (a^\dagger)^2 a^2 + \frac{\eta}{2} ((a^\dagger)^3 + a^3) + \lambda((a^\dagger)^4 a + a^\dagger a^4)$ & \cite{kwon-realisationprotectedcatqutrit-2026} & \hasext \\
    \midrule
     $\eta_1 (a^\dagger)^2 a^2+ \eta_2 (a^\dagger a^3 + (a^\dagger)^3 a) +a^4 + (a^\dagger)^4$ & \multicolumn{2}{l}{\cite{boutin-effecthigherordernonlinearities-2017} \, See \Cref{fig:sweep}}  \\ 
     \midrule
    $\iu \lambda(a-a^\dagger) -\iu \frac{\lambda^3}{2}((a^\dagger)^2 a-a^\dagger a^2)$ & \multirow{3}{*}{\cite{kasri-nonlinearphasegates-2026}} & \hasext \\
    $\frac{\lambda^2}{2!}((a^\dagger)^2+a^2) -\frac{\lambda^4}{3!}((a^\dagger)^3 a + a^\dagger a^3)$ & & \hasext \\
    $-\iu \frac{\lambda^3}{3!}((a^\dagger)^3 - a^3)- \iu\frac{\lambda^5}{5!} ((a^\dagger)^4 a-a^\dagger a^4)$ & & \hasext \\
     \bottomrule
    \end{tabular*}
    \caption{Result of the algorithm for a selection of operators appearing in the
literature. \esssa~indicates that the operator defines a unique physical system;
\hasext~that additional physical input is required to fix one; and \noext~that
no physical system can have the operator as its Hamiltonian. Here $a$ and $a^\dagger$ are the bosonic annihilation and creation
operators;
all remaining symbols are real constants with $\lambda \neq 0$ throughout. 
A parameter sweep of the operator of~\cite{boutin-effecthigherordernonlinearities-2017} is shown in \cref{fig:sweep}.
The references
listed indicate works in which the operator appears; they
carry no implication as to how it is treated there.}
    \label{tab:literature-results}
\end{table}
\emph{Introduction}---Non-Gaussian operations generated by polynomials of degree higher
than two in the
position and momentum operators $x$ and $p$ (equivalently, in the annihilation
and creation operators $a=x+\iu p$ and $a^\dagger=x-\iu p$) are
strictly necessary for
continuous-variable quantum
computing~\cite{lloyd-quantumcomputationcontinuous-1999,hillmann-universalgateset-2020,arzani-effectivedescriptionsbosonic-2025,chabaud-energybosonscomputational-2025,chabaud-bosonicquantumcomputational-2026,rudolph-physicaluniversalmodel-2026},
and rapid experimental progress has made increasingly exotic Hamiltonians
available in bosonic platforms such as superconducting
circuits~\cite{chang-observationthreephotonspontaneous-2020,eriksson-universalcontrolbosonic-2024,deng-generatingcharacterizinggeneralized-2026,kwon-realisationprotectedcatqutrit-2026,saner-generatingarbitrarysuperpositions-2026,fontaine-programminganharmonicpotentials-2026}
and trapped
ions~\cite{bazavan-squeezingtrisqueezingquadsqueezing-2026,mcgarry-programmablequantumsimulation-2026,junsheng-cubicphasegate-2026}.
This marks a departure from most of the history of quantum optics, whose experimentally relevant regime was that of Gaussian operations—polynomials in $x$ and $p$ of degree at most two. In that regime, a Hermitian generator always defines a unique dynamics~\cite{weedbrook-gaussianquantuminformation-2012,reed-mmmp2-fourier-1975,folland-harmonicanalysisphase-2016};
beyond it, this can no longer be taken for granted, and the first and unavoidable step towards understanding what non-Gaussian operators can and cannot do is to answer the following question: \textit{when does a
non-Gaussian bosonic operator specify a unique unitary dynamics?}

The answer is nontrivial even for some of the simplest non-Gaussian operators. The
higher-order squeezing operators $(a^\dagger)^k+a^k$, $k\geq3$, were recently
shown to define not one dynamics but a whole manifold of physically distinct
ones~\cite{fischer-selfadjointrealizationshigherorder-2026}, settling a
long-standing debate on their
physicality~\cite{fisher-impossibilitynaivelygeneralizing-1984,braunstein-generalizedsqueezing-1987,nagel-higherpowersqueezed-1997,gardas-multiphotonrabimodel-2013,lo-commentmultiphotonrabi-2014,gardas-replycommentmultiphoton-2014,ashhab-propertiesdynamicsgeneralized-2025,gordillo-hachuel-commentpropertiesdynamics-2026,ashhab-replycommentproperties-2026};
analogous results hold for multiphoton light--matter Hamiltonians such as the
$k$-photon Rabi
model~\cite{braak-kphotonquantumrabi-2025,fischer-selfadjointextensionskphoton-2026}.
Such operators cannot be used as Hamiltonians without additional physical input,
and simulating them is treacherous: their finite-dimensional truncations need not
converge~\cite{ashhab-finitedimensionalapproximationsgeneralized-2026}, and when
they do, the limit may be a dynamics selected by the discretization rather than
by the physics~\cite{wrongbox}, with nothing in the output to indicate it. With
non-Gaussian operators proliferating in the literature, such pathologies must be
identified systematically rather than case by case---yet every case settled so
far has required expert-level functional analysis, out of reach of most
practitioners.

This Letter closes this gap with a numerical method, based on matrix-valued Weyl theory~\cite{schulz-baldes-geometryweyltheory-2010}, that takes as input nothing but the coefficients of any Hermitian polynomial in $x$ and $p$ and returns two integers $(n_+,n_-)$—the deficiency indices of the operator. These determine whether the polynomial defines a single physical dynamics, none at all, or infinitely many physically distinct ones.
The truncation schemes routinely employed in quantum simulation~\cite{qutip5}
cannot answer this question, let alone produce the deficiency indices.
No prior knowledge of functional analysis is required of the user, and wherever an
analytic answer is known we have verified that the algorithm reproduces it.
Beyond those cases, the method determines the deficiency indices of many
operators for which no analytic treatment is currently available. In particular, we apply it to non-Gaussian operators appearing in recent experimental
and theoretical work on continuous-variable quantum computing, collected in
\Cref{tab:literature-results}: the algorithm finds that a substantial fraction of them either admit several dynamics or none at all.
Finally, the method can analyze families of bosonic operators: scanning the coefficients of a given polynomial yields phase
diagrams of the deficiency indices, such as the one in \cref{fig:sweep}.

\begin{figure}[b]
    \centering
    \resizebox{\columnwidth}{!}{%
    \begin{tikzpicture}[
    font=\scriptsize,
    box/.style={rectangle, draw, rounded corners=4pt, text width=4.1cm,
                inner sep=3.5pt, align=center, line width=0.7pt},
    dia/.style={diamond, draw, aspect=1.5, inner sep=1pt, align=center,
                rounded corners=2pt, line width=0.7pt},
    arrow/.style={-Latex, line width=0.7pt},
    lbl/.style={font=\scriptsize, anchor=east, inner sep=3pt}
]
\def\spine{-3.0}
\def\mid{-1.2}

\node[box] (op) at (\mid,0) {Hermitian operator $H$};
\node[dia] (q)  at (\mid,-1.3) {Deficiency\\ indices $n_\pm$};

\node[box, draw=green!55!black, fill=green!12] (one) at (0,-2.9)
    {Unique physical realization};
\node[box, draw=orange!90!black, fill=orange!18] (multi) at (0,-4.3)
    {Additional information about the system required};
\node[box, draw=red!75!black, fill=red!12] (none) at (0,-5.8)
    {No physical realization possible};

\draw[arrow] (op) -- (q);
\draw[line width=0.7pt] (q.south) -- (\mid,-2.1) -- (\spine,-2.1) -- (\spine,-5.8);
\draw[arrow] (\spine,-2.9) -- (one.west);
\draw[arrow] (\spine,-4.3) -- (multi.west);
\draw[arrow] (\spine,-5.8) -- (none.west);

\node[lbl] at (\spine,-2.9) {$n_+=n_-=0$};
\node[lbl] at (\spine,-4.3) {$n_+=n_-\neq0$};
\node[lbl] at (\spine,-5.8) {$n_+\neq n_-$};
\end{tikzpicture}
    }
    \caption{The three possible physical realizations of a Hermitian
operator $H$, classified by its deficiency indices $n_\pm$. If
exactly one unitary evolution is compatible with $H$, the operator alone fixes
the physics. If several are, further physical input is needed to single one out, and different choices
describe genuinely different systems. If none is, $H$ is not the Hamiltonian of
any physical system.}
    \label{fig:self-adjoint}
\end{figure}

\medskip\emph{What is a Hamiltonian?}---In applications, one requires only that $H$ be Hermitian, or symmetric in the mathematical terminology. But Hermiticity is not enough to qualify an operator as a physical Hamiltonian. So what is?
Since the answer determines how the verdicts of our algorithm are to be read, we briefly recall it along a route slightly different from, but equivalent to, the standard one~\cite{reed-mmmp1-funkana-1980,schmudgen-unboundedselfadjointoperators-2012}: rather than postulating $H$ and
asking what dynamics it generates, let us take the converse route and start from
the dynamics.

The states of a closed quantum system are vectors $\psi$ in a Hilbert space
$\hilbert$, evolving as $\psi(t)=U(t)\psi(0)$ with $U(t)$ unitary and strongly
continuous. The Hamiltonian is then \emph{defined} as the generator of this
evolution, $-\iu H\psi=\lim_{t\to0}\left(U(t)\psi-\psi\right)/t$. In infinite
dimensions this limit need not exist for every $\psi$, so $H$ comes with a
domain $\domain(H)$---the states on which it is defined---which, whenever $H$ is unbounded, is never the whole of $\hilbert$. This
is always the situation in continuous-variable systems, where $x$ and $p$ are
themselves unbounded. 

A short calculation shows that such an $H$ is necessarily
Hermitian. It also satisfies a third, less familiar property~\cite{gustafson-mathematicalconceptsquantum-2011}: the equations
\begin{equation}
    \label{eq:deficiency}
    (H\pm\iu)\psi=\eta,
\end{equation}
with $\psi\in\domain(H)$, can be solved for every $\eta\in\hilbert$, as one sees by writing the solution
explicitly as a Laplace transform of $U(t)$~\cite{schmudgen-unboundedselfadjointoperators-2012,reed-mmmp2-fourier-1975}. Remarkably, the converse
holds as well: by Stone's theorem, 
a densely defined closed Hermitian operator for which, in addition,~\cref{eq:deficiency}
is always solvable generates a unique unitary evolution~\cite{schmudgen-unboundedselfadjointoperators-2012,reed-mmmp2-fourier-1975}. Such operators
are called self-adjoint, and they are exactly the physical Hamiltonians.

The failure of~\cref{eq:deficiency} is measured by two integers, the deficiency
indices $n_\pm=\dim\Ran(H\pm\iu)^\perp$ of $H$, which count the independent states that cannot be
reached~\cite{schmudgen-unboundedselfadjointoperators-2012,reed-mmmp2-fourier-1975}. Their meaning is physical: they count the independent ways in which the system
can escape---to a boundary, to a singularity, or, for the bosonic operators of
interest here, to arbitrarily high occupation numbers---in finite time, and
therefore the amount of information one must provide about what happens when it
does; for these operators, this is made precise through the
probability flux at infinity, see \cref{sec:flux}. Von Neumann's
theorem~\cite{reed-mmmp2-fourier-1975,schmudgen-unboundedselfadjointoperators-2012} makes the three possibilities precise (see \cref{fig:self-adjoint}). If $n_+=n_-=0$, nothing escapes: the probability flux at infinity vanishes identically, and $H$ fixes a unique evolution. If $n_+=n_-\neq0$, the evolution is fixed only after further physical input is supplied: one must prescribe how the flux at infinity is forced to vanish—in effect, a boundary condition at infinity. The admissible boundary conditions are in one-to-one correspondence with the $n_+\times n_+$ unitary matrices 
describing how the escape channels are scattered into one another at infinity: unitarity is precisely the condition that whatever leaves through one channel returns through another. The operator thus determines not one dynamics but a manifold of them.
If $n_+\neq n_-$, there are more independent ways to escape in one direction of time than in the other, so no boundary condition at infinity can make the flux vanish, and no unitary evolution exists at all: the best $H$ can generate is a one-sided semigroup, isometric in one time direction while probability drains away in the other, and never invertible. Such an $H$ describes an irreducibly dissipative process, and is not the Hamiltonian of any closed system. Each case is encountered in practice, as our results in \cref{tab:literature-results} show.

In practice, of course, physics proceeds in the opposite direction: one writes
down a candidate $H$ on a convenient dense domain, and the question is which of
the three cases applies. Hermiticity is immediate by inspection. The deficiency
indices are not, and this is where our algorithm comes in.

\begin{figure}
\centering
\includegraphics[width=\columnwidth]{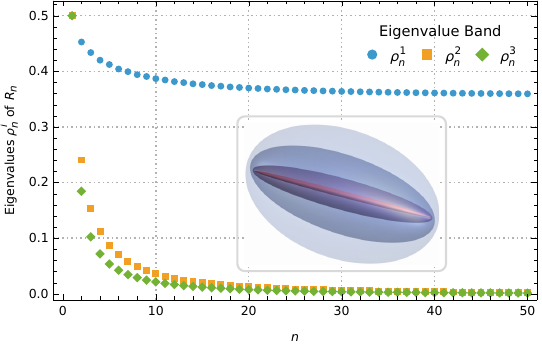}
\caption{The algorithm at work on the operator $x^2p+xpx+px^2$
of~\cite{gordillo-hachuel-quantummetrologicaladvantage-2026,guo-metrologicalsensitivitygaussian-2025}. Eigenvalues $\rho^1_n \geq \rho^2_n \geq \rho^3_n$ of the
radius matrices $R^+_n$ against $n$: two of the three decay to zero while one
remains finite, giving $n_+=1$. Equivalently (inset), the Weyl disks $\sphere^+_n$
determined by $R^+_n$---shown for $n=4,10,40,200$---are nested one inside the
next and collapse from three dimensions onto a one-dimensional segment, whose
dimension is $n_+$. The operator therefore admits a whole family of distinct
unitary evolutions rather than a single one. The same results are obtained for $n_-$. Counting the surviving eigenvalues
is the only numerical step in the algorithm.}
\label{fig:nested}
\end{figure}

\medskip\emph{The algorithm}---We now return to the operators of interest and describe how the algorithm computes their deficiency indices. A single-mode bosonic operator is given by a
normal-ordered polynomial
\begin{equation}
    \label{eq:polynomial}
    p_M=\sum_{k,l=0}^{s}M_{k,l}\,(a^\dagger)^k a^l
\end{equation}
in the creation and annihilation operators, Hermitian whenever the coefficient
matrix $M$ is. Via $x=\tfrac{1}{2}(a+a^\dagger)$ and
$p=\tfrac{\iu}{2}(a^\dagger-a)$, every polynomial in position and momentum takes
this form. In formal manipulations one tacitly works on states for which the
usual operations with $a$ and $a^\dagger$---applying them repeatedly, commuting
them, differentiating---may be performed without concern.
Making this precise
means fixing an initial domain: whenever none is specified,
as is almost always the case, one of the standard choices has in effect been
made. These are the Schwartz space $\mathcal{S}(\rnum)$,
the smooth compactly supported functions $C^\infty_0(\rnum)$, and the finite linear combinations of Fock
states. The choice turns out not to matter: we show in \cref{sec:domains},
\Cref{lem:all-domains}, that
$p_M$ has the same closure, and thus the same deficiency indices, on all three.

Let us sketch how the algorithm works, leaving the details to~\cref{sec:algorithm}. In the number basis, a Hermitian polynomial $p_M$ acts as a
linear recurrence---a block Jacobi operator, as we show
in~\cref{sec:reduction}---and the algorithm proceeds by tracking how the solutions of
that recurrence are constrained as more and more excitations are included. It
first computes a sequence of so-called radius matrices $R_n^+$ and $R_n^-$ (\cref{sec:weyl-disks}),
strictly positive definite matrices acting on $\cnum^d$, where $d$ is the
maximum shift in excitation number induced by one application of $p_M$. Each of them specifies the size of a Weyl disk $\sphere_n^\pm\subset\cnum^{d\times d}$~\cite{schulz-baldes-geometryweyltheory-2010}, which has a transparent meaning: if the recurrence is truncated after $n$ steps and an
arbitrary boundary condition is imposed at the cut, $\sphere_n^\pm$ is the set
of values that the Green function of the truncated operator can take (\cref{prop:weyl-sphere}). 
These disks are strictly nested, $\sphere^\pm_{n+1}\subset\sphere^\pm_n$ for every $n$, so the limits $\sphere^\pm=\lim_{n\to\infty}\sphere^\pm_n$ and $R^\pm=\lim_{n\to\infty}R^\pm_n$ always exist—but the latter need not retain full rank $d$, in which case the disks collapse onto lower-dimensional ones. The deficiency indices $n_\pm$ of $p_M$ are precisely the ranks of $R^\pm$, equivalently the dimensions of $\sphere^\pm$ (\cref{cor:main}).
In particular, $p_M$ is associated with a unique time evolution if and only if
both $\sphere^\pm$ consist of a single point; that is, if and only if the boundary
condition at the cut ceases to matter as $n\to\infty$ (Weyl's limit-point case). 
The case $n_+\neq n_-$, in which no unitary dynamics exists,
cannot occur
in two situations of practical importance: when all coefficients $M_{k,l}$
are real, since $p_M$ then commutes with complex conjugation in the number
basis~\cite{reed-mmmp2-fourier-1975}; and when $p_M$ is bounded from below, since a
semibounded Hermitian operator always admits a self-adjoint extension, the Friedrichs
one~\cite{reed-mmmp2-fourier-1975,schmudgen-unboundedselfadjointoperators-2012}.

In practice it is enough to compute the positive eigenvalues of the radius matrices and
examine their asymptotics, the number of eigenvalues with nonzero limit being
the dimension of $\sphere^\pm$. \Cref{fig:nested} shows both pictures for
the operator $x^2p+xpx+px^2$
of~\cite{gordillo-hachuel-quantummetrologicaladvantage-2026,guo-metrologicalsensitivitygaussian-2025}:
the 
three-dimensional disks $\sphere^+_n$ collapse onto a segment, and
correspondingly a single eigenvalue remains nonzero, giving $n_+=1$. We further
describe in~\cref{sec:convergence,sec:raabe} how to stabilize the convergence and automate the
vanishing check, which is what makes the algorithm practical for sweeps over
families of Hermitian operators of the form \eqref{eq:polynomial}.

\medskip\emph{The algorithm at work}---We have implemented the algorithm in
Mathematica~\cite{Mathematica}; the package is publicly
available~\cite{notebook}, and we also provide a public web
interface~\cite{website}. Let us first use it to audit the literature on
continuous-variable quantum computing.
\Cref{tab:literature-results} collects Hermitian
operators appearing in experimental and theoretical work, where they serve as generators of gates or as
observables---roles that in either case require a unique associated unitary
group. Many of them do not have one: several admit multiple compatible evolutions, and some admit none at all. A
verdict of non-uniqueness is not a defect in the work in which the operator
appears; it marks a place where the physical specification is incomplete, and
where the missing input remains to be provided.

Finally, we demonstrate how the algorithm can be used to analyze entire families
of operators. We consider the operator $(a^\dagger)^4+a^4+\eta_1\left((a^\dagger)^3a+a^\dagger a^3\right)+\eta_2(a^\dagger)^2a^2$ from \cite{boutin-effecthigherordernonlinearities-2017}, sweeping over the real parameters $\eta_1,\eta_2$, applying the algorithm at each
point, and plotting the resulting deficiency indices in \cref{fig:sweep}.
Pronounced thresholds between regimes are clearly visible: for some parameter
combinations the operator is associated with a unique unitary dynamics, while in
others additional information about the physical system is needed to fix a time
evolution. Operators lying close to the boundary of the former region are known
to yield dynamics highly sensitive to small parameter
variations~\cite{fischer-essentiallysingularlimits-2026}. Such sweeps thus
quickly locate the regions worth subsequent analytical investigation.

\begin{figure}[t]
    \centering
    \includegraphics[width=.95\columnwidth]{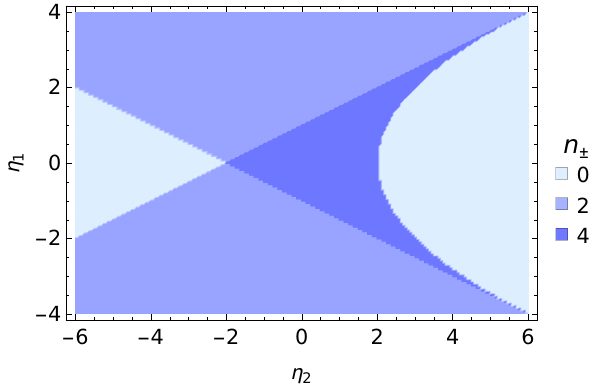}
    \caption{
  Deficiency indices $n_\pm$ of the operator
$(a^\dagger)^4 +a^4+ \eta_1((a^\dagger)^3 a+a^\dagger a^3) +\eta_2 (a^\dagger)^2 a^2$ from \cite{boutin-effecthigherordernonlinearities-2017},
obtained by applying the algorithm at each point of the $(\eta_1,\eta_2)$ plane.
Three regimes with sharp boundaries appear: $n_\pm=0$, where the
operator alone fixes a unique dynamics; $n_\pm=2$ and $n_\pm=4$, where
physical input is needed to select one of a family of inequivalent
evolutions. A sweep of this kind takes a single run of the
algorithm per point and requires no analytical input. }
    \label{fig:sweep}
\end{figure}

\medskip
\emph{Discussion}---What this Letter offers the practitioner is a decision procedure where
previously there was only case-by-case analysis. If the algorithm returns that a
bosonic operator generates a unique time evolution, the standard toolkit---in
particular numerical truncation---can be used with confidence. If it returns that several evolutions exist, no amount of numerical effort will
single one out: the operator alone does not determine the physics, and
additional input is required before the model can be simulated or measured—and, as noted above, a truncation may converge to the wrong dynamics without warning~\cite{wrongbox}, so that the absence of any numerical warning is no evidence that the model is well posed. This is precisely the failure
the present check is meant to avoid. By telling where uniqueness fails and how large the family of admissible dynamics is, the algorithm singles out the operators for which a classification like that of the higher-order squeezing operators can now be sought, turning what has so far been a series of isolated results into a systematic program.

The one step at which numerics enters is the
mildest one available. 
Determining $n_\pm$ amounts, in principle, to counting the linearly independent square-summable solutions of a linear recurrence (\cref{prop:decomposition,lem:dspace}), which is
a delicate
computation even in the simplest cases, as shown for instance in \cite{fischer-selfadjointrealizationshigherorder-2026} for higher-order squeezing operators. Our method replaces that computation by a far more benign one. The eigenvalues
of the radius matrices are positive and monotonically decreasing, so a limit is
guaranteed to exist, and the entire numerical content of the algorithm is the
decision of whether that limit is zero. Deciding the vanishing of a monotone
sequence is about as stable a numerical question as one can ask, and it is
incomparably easier than reconstructing the deficiency indices from scratch.
Away from the boundaries in coefficient space at which the indices change value,
the resulting determination is therefore essentially certain; only in the
immediate vicinity of those boundaries does the distinction between a limit that
vanishes and one that is merely small become delicate. Comparison with the
analytic results available confirms this throughout (\cref{tab:literature-results}).

Several directions suggest themselves. While we have treated single-mode
bosonic operators with scalar coefficients here, the algorithm extends directly
to polynomials $p_M$ with matrix-valued coefficients. Every polynomial operator
coupling a single bosonic mode to a finite-dimensional matter system is of this
form, so the method covers a large class of light-matter systems, among them
the $k$-photon Rabi
model~\cite{braak-kphotonquantumrabi-2025,fischer-selfadjointextensionskphoton-2026}
and its generalizations.
The multimode case is harder. Self-adjointness of a polynomial operator on
several bosonic modes is in general
undecidable~\cite{chabaud-energybosonscomputational-2025}, so no single
algorithm can treat all such systems. In many cases of interest, however,
symmetry reduces a multimode operator to invariant subspaces on which it again
behaves as a single-mode operator, and our method applies there.

Beyond these questions of scope, the reduction to block Jacobi form has a
consequence of a different kind. Once a bosonic polynomial is recognized as a
block Jacobi operator, the whole machinery developed for such operators becomes
available. The mathematical results underlying our algorithm---the reduction of every
$p_M$ to block Jacobi form (\cref{prop:decomposition}), its independence of the
choice of domain (\cref{lem:all-domains}), and the resulting characterization
of the deficiency indices (\cref{cor:main})---can then be combined with the
spectral theory of these operators~\cite{swiderski-spectralpropertiesblock-2018,moszynski-barriernonsubordinacyabsolutely-2025,berg-orderindeterminatemoment-2014,budyka-deficiencyindicesblock-2024}
to establish the deficiency indices of specific families of bosonic polynomials
analytically.

We propose the algorithm as a routine first check wherever bosonic operators are employed, and in continuous-variable quantum computing in particular, as a
routine first check that places a model on firm ground before the standard
methods of physics and computer science---truncation, simulation, compilation
into gate sequences---are brought to bear on it with their convergence
guarantees intact. 
Deciding whether a bosonic operator is a Hamiltonian is now a computation that anyone can run.\medskip

\textit{Acknowledgments}---F.F. acknowledges support from the Munich Quantum Valley (MQV), which is supported by the Bavarian State Government with funds from the High-tech Agenda Bayern Plus.

\bibliography{refs-bibtex}

\appendix
 \makeatletter
\renewcommand{\p@subsection}{\thesection.}
\makeatother
\setcounter{secnumdepth}{3}
\onecolumngrid

\section{Block Jacobi operators and Weyl disks}\label{sec:weyl}
The algorithm in this work is based on the following 
observation: every operator $p_M$ considered here can be written as a
so-called block Jacobi operator.
For such operators, the theory of Weyl disks characterizes the
deficiency indices as the rank of the limit of a sequence of matrices, each constructed from the operator in finitely many steps---a
characterization amenable to numerical evaluation, as a closed formula would
not be.
This appendix develops that theory: \cref{sec:jacobi} introduces block
Jacobi operators, \cref{sec:counting} explains why a direct count of their
square-summable solutions is impractical, \cref{sec:weyl-disks} presents the
Weyl disk characterization of the deficiency indices, and \cref{sec:flux} gives
its interpretation in terms of a probability flux at infinity.
\Cref{sec:pm-domains} then closes the circle: there we establish the equivalence
and apply it to obtain the deficiency indices of $p_M$.
For an in-depth overview of block Jacobi operators we refer
to~\cite{berezanskii-expansioneigenfunctionsselfadjoint-1968,schulz-baldes-geometryweyltheory-2010,dyukarev-deficiencynumberssymmetric-2006,moszynski-blockjacobimatrices-2025}.
Many aspects of the theory are better known in the scalar case, which is
developed in~\cite{teschl-jacobioperatorscompletely-1999,akhiezer-classicalmomentproblem-2020,schmudgen-momentproblem-2017}. 

\subsection{Block Jacobi operators}\label{sec:jacobi}
We start by fixing the notation.
For $L \in \nnum$, we denote the vector space of $\cnum^L$-valued sequences $u = (u_n)_{n \in \nnum}$ by $\ell(\nnum,\cnum^L)$.
The subspace of square-summable sequences $\ell^2(\nnum,\cnum^L)$ is a Hilbert space with respect to the scalar product
\begin{equation}
	\braket{u,v} = \sum_{n = 0}^\infty \braket{u_n,v_n} \, , \quad \norm{u} = \sqrt{\braket{u,u}}\, .
\end{equation}
Moreover, we introduce the space of finitely supported sequences
\begin{equation}
	\ell_0(\nnum,\cnum^L) = \left\{u \in \ell(\nnum,\cnum^L) \, : \, \exists N \in \nnum \: \text{s.t.} \:  u_n = 0 \quad \forall n \geq N\right\},
\end{equation}
which is a dense subspace of $\ell^2(\nnum,\cnum^L)$. 
We also require the projections $\pi_n: \ell(\nnum,\cnum^L)\to\cnum^L$, $\pi_n u = u_n$, onto the $n$th entry of a sequence.
Finally, let $\Mat(L,\cnum)$ be the space of complex $L \times L$ matrices acting on $\cnum^L$, which we equip with the supremum norm $\norm{\cdot}_\infty$.
These definitions allow us to introduce the block Jacobi operators, which act on sequences $u \in \ell(\nnum,\cnum^L)$ as the infinite-dimensional block matrix
\begin{equation}
    \label{eq:def-block-jacobi}
    \Jop = \begin{pmatrix}
            B_0 & A_0 & 0 & 0 & 0 & \dots \\
            A_0^\ast & B_1 & A_1 & 0 & 0 & \dots \\
            0 & A_1^\ast & B_2 & A_2 & 0 & \dots \\
            0 & 0 & A_2^\ast & B_3 & A_3 & \dots \\
            \vdots & \vdots & \vdots & \vdots & \vdots  & \ddots
           \end{pmatrix} \, ,
\end{equation}
where $A_n,B_n \in \Mat(L,\cnum)$ and the $B_n$ are Hermitian.
Equivalently, we can define the action of $\Jop$ on a sequence $u \in \ell(\nnum,\cnum^L)$ by
\begin{align}
    (\Jop u)_n & = A_n u_{n+1} + B_n u_n + A_{n-1}^\ast u_{n-1}\,, \quad n \geq 1\,, \\
    (\Jop u)_0 & = A_0 u_1 + B_0 u_0.
\end{align}
If either of the sequences $(A_n)_{n \in \nnum}$ or $(B_n)_{n \in \nnum}$ is unbounded with respect to the supremum norm, $\Jop$ cannot define a bounded operator on $\ell^2(\nnum,\cnum^L)$.
We hence define the minimal operator $J$ as the restriction of $\Jop$ to the initial domain of finitely supported sequences $\ell_0(\nnum,\cnum^L)$, which is a Hermitian operator because the $B_n$ are Hermitian by assumption.
Our goal is to determine whether $J$ is essentially self-adjoint, and, if not, to determine its deficiency indices.
To this end, we assume in the following that the off-diagonal blocks $A_n$ are invertible.
We then obtain a first sufficient condition for essential self-adjointness of $J$, the so-called Carleman criterion:
\begin{proposition}
    \label{prop:carleman}
    Let $J$ be a block Jacobi operator defined by \cref{eq:def-block-jacobi} on $\ell_0(\nnum,\cnum^L)$, and assume that the Carleman condition
    \begin{equation}
        \sum_{n = 0}^\infty \frac{1}{\norm{A_n}_\infty} = \infty
    \end{equation}
    holds.
    Then $J$ is essentially self-adjoint.
\end{proposition}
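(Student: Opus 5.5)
We prove that the deficiency spaces are trivial: we show that for $z\in\{\iu,-\iu\}$ every solution $u\in\ell^2(\nnum,\cnum^L)$ of $J^\ast u = z u$ vanishes. Since $J$ is Hermitian with dense domain $\ell_0$, von Neumann's criterion then gives essential self-adjointness. Because $J$ is the restriction of the formal operator $\Jop$ to finitely supported sequences, its adjoint is the maximal operator: $J^\ast u=\Jop u$ on $\{u\in\ell^2:\Jop u\in\ell^2\}$. This is checked entrywise by pairing with sequences supported at a single site. A deficiency vector is therefore an $\ell^2$ solution of the recurrence $A_n u_{n+1}+B_n u_n+A_{n-1}^\ast u_{n-1}=z u_n$, including the boundary equation at $n=0$. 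Since the $A_n$ are invertible, such a solution is determined by $u_0$.

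The key tool is the discrete Green/Wronskian identity. For the truncation to $\{0,\dots,N\}$, the Hermiticity of the $B_n$ gives
\begin{equation*}
\sum_{n=0}^{N}\bigl(\langle u_n,(\Jop u)_n\rangle-\langle(\Jop u)_n,u_n\rangle\bigr)=\langle u_N,A_N u_{N+1}\rangle-\langle A_N u_{N+1},u_N\rangle .
\end{equation*}
All interior terms telescope, and the boundary term at $n=0$ is absent because of the form of the first row. Inserting $\Jop u=zu$ with $z=\pm\iu$, the left side equals $\pm2\iu\sum_{n\le N}\norm{u_n}^2$. Writing $S_N=\sum_{n\le N}\norm{u_n}^2$, we obtain
\begin{equation*}
S_N\le \norm{A_N}_\infty\,\norm{u_N}\,\norm{u_{N+1}}\le \tfrac12\norm{A_N}_\infty\bigl(\norm{u_N}^2+\norm{u_{N+1}}^2\bigr).
\end{equation*}

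Now suppose $u\neq0$. Then $u_0\neq0$, since otherwise the recurrence and the invertibility of the $A_n$ force $u\equiv0$. Hence $S_N\ge S_0=\norm{u_0}^2=:c>0$ for all $N$. Dividing by $\norm{A_N}_\infty$ gives
\begin{equation*}
\frac{c}{\norm{A_N}_\infty}\le \tfrac12\bigl(\norm{u_N}^2+\norm{u_{N+1}}^2\bigr).
\end{equation*}
Summing over $N$, the left side diverges by the Carleman condition, while the right side is at most $\norm{u}^2<\infty$. This contradiction shows $u=0$, so both deficiency indices vanish.

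The main obstacle is setting up the Green identity correctly: the signs and adjoints in the boundary term $\langle u_N,A_Nu_{N+1}\rangle-\langle A_Nu_{N+1},u_N\rangle$ must be tracked, along with the absence of a boundary term at $n=0$. By comparison, identifying $J^\ast$ with the maximal operator is routine. The summation step is elementary once the inequality $S_N\le\norm{A_N}_\infty\norm{u_N}\norm{u_{N+1}}$ is in place.
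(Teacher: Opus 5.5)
Your proof is correct. The paper gives no proof of \cref{prop:carleman} of its own; it quotes the result as a standard fact from the block Jacobi literature cited in \cref{sec:weyl}. Your argument is the classical proof of that fact: the Green identity gives $S_N\le\norm{A_N}_\infty\norm{u_N}\norm{u_{N+1}}$ for an $\ell^2$ solution of $\Jop u=\pm\iu u$, and summing $c/\norm{A_N}_\infty$ against $\norm{u}^2$ then contradicts the Carleman condition.

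Two minor remarks. If $\norm{\cdot}_\infty$ is read as the max-entry norm rather than the operator norm, your bound only picks up a harmless factor $L$. Also, you need not argue that $u_0\neq0$: any nonzero $u$ has $S_N\geq c>0$ for all large $N$, and invertibility of the $A_N$ is needed only to divide by $\norm{A_N}_\infty$.
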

Finding necessary conditions for essential self-adjointness, or determining the deficiency spaces $\dspace_\pm = \Ran(J\pm \iu)^\perp$, is more involved, as we now discuss.
To this end, we introduce the \textit{matrix-valued generalized eigenvalue equation}: for $z \in \cnum$,
\begin{equation}
    \label{eq:generalized-eigenvalue}
    A_n U_{n + 1} + B_n U_n + A_{n-1}^\ast U_{n-1} = z U_n \quad \forall n \geq 1\, ,
\end{equation}
for sequences $(U_n)_{n \in \nnum}$ of matrices $U_n \in \Mat(L,\cnum)$.
Since the $A_n$ are invertible, specifying the initial conditions $U_0,U_1 \in \Mat(L,\cnum)$ determines a unique solution iteratively.
In analogy with the \textit{orthogonal polynomials of the first kind} in the scalar case (namely, the solutions satisfying the Dirichlet condition $P_{-1}(z)=0$), we introduce the particular solution $(P_n(z))_{n \in \nnum}$ by fixing the initial conditions
\begin{equation}
    \label{eq:poly-p}
    P_0(z) = \id \, , \quad P_1(z) = A_0^{-1}(z-B_0) \, .
\end{equation}
It follows that $P_n(z)$ is a matrix-valued polynomial in $z$ of degree $n$, which solves the matrix-valued eigenvalue equation $\Jop P(z) = z P(z)$.
In particular, for all $x \in \cnum^L$, the sequence $(P_n(z) x)_{n \in \nnum}$ is a solution of the equation $\Jop u = z u$.
Since the deficiency spaces of $J$ are exactly the square-summable solutions of this equation, we obtain the following characterization:
\begin{lemma}
    \label{lem:dspace}
    Let $J$ be a block Jacobi operator defined as in \cref{eq:def-block-jacobi}, and $P(z)$ the associated orthogonal polynomials of the first kind.
    The deficiency spaces of $J$ are given by
    \begin{equation}
        \label{eq:dspace-poly}
        \dspace_\pm = \left\{ (P_n(\pm \iu) x)_{n \in \nnum} \, : \, x \in \cnum^L, \;\sum_{n = 0}^\infty \norm{P_n(\pm \iu) x}^2 < \infty \right\} \, .
    \end{equation}
\end{lemma}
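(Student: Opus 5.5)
The plan is to identify $\dspace_\pm = \Ran(J\pm\iu)^\perp = \Ker(J^\ast \mp \iu)$ and then describe $\Ker(J^\ast\mp\iu)$ explicitly. I will first compute the adjoint: $J^\ast$ acts as the formal operator $\Jop$ on its maximal domain $\{u\in\ell^2(\nnum,\cnum^L): \Jop u\in\ell^2(\nnum,\cnum^L)\}$. This follows by testing against the finitely supported vectors $e_n\otimes y$, $y\in\cnum^L$. Since $J$ is defined on $\ell_0$, a vector $v\in\ell^2$ lies in $\domain(J^\ast)$ with $J^\ast v = w$ if and only if $\braket{v,J(e_n\otimes y)}=\braket{w,e_n\otimes y}$ for all $n,y$. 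Because the column $J(e_n\otimes y)$ has only the three nonzero entries $A_{n-1}y$, $B_n y$ and $A_n^\ast y$, this condition reads $w_n = (\Jop v)_n$ componentwise. Here I use that $B_n$ is Hermitian, that the adjoint of $A_n$ appears in the right place, and that the $n=0$ row is treated separately. By linearity and density, testing on these vectors suffices.

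Next I treat the two sets $\Ran(J\pm \iu)^\perp$. A vector $u$ lies in $\Ran(J\pm\iu)^\perp$ if and only if $\braket{u,(J\pm\iu)f}=0$ for all $f\in\ell_0$. By the previous step this means $u\in\domain(J^\ast)$ and $(J^\ast\mp\iu)u=0$, i.e.\ $u\in\ell^2$ and $\Jop u = \pm\iu\, u$ componentwise, including the boundary row $n=0$. So I need to classify all sequences, square-summable or not, that solve $(\Jop u)_n = z u_n$ for all $n\ge 0$ with $z=\pm\iu$.

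For this classification, the $n=0$ equation $A_0u_1+B_0u_0=zu_0$ gives $u_1 = A_0^{-1}(z-B_0)u_0 = P_1(z)u_0$, using invertibility of $A_0$. For $n\ge1$, the recurrence \eqref{eq:generalized-eigenvalue} applied to vectors, $u_{n+1}=A_n^{-1}\bigl((z-B_n)u_n - A_{n-1}^\ast u_{n-1}\bigr)$, determines $u_{n+1}$ uniquely from $u_n,u_{n-1}$. Since $(P_n(z)u_0)_n$ satisfies the same recurrence with the same initial data $u_0$ and $u_1=P_1(z)u_0$, induction gives $u_n=P_n(z)u_0$ for all $n$. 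Conversely, each $(P_n(z)x)_n$ solves all rows, including $n=0$ by the choice \eqref{eq:poly-p}. So the formal solutions of $\Jop u = zu$ are exactly $\{(P_n(z)x)_n : x\in\cnum^L\}$, and intersecting with $\ell^2$ yields \eqref{eq:dspace-poly}. Note also that $x\mapsto (P_n(z)x)_n$ is injective because $P_0=\id$, so $\dim\dspace_\pm$ equals the dimension of the subspace of admissible $x$.

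The only mildly delicate point is the adjoint computation. One must check that no extra condition at infinity enters the description of $\domain(J^\ast)$, which is true because $J$ is defined only on finitely supported sequences. One must also make sure the sign convention matches: $\Ran(J+\iu)^\perp$ corresponds to the eigenvalue $z=-\iu$ of $J^\ast$. So depending on the paper's convention for $n_\pm$, the labels $\pm\iu$ in \eqref{eq:dspace-poly} should be read as the eigenvalue $\pm\iu$ of $J^\ast$, i.e.\ $\dspace_\pm=\Ker(J^\ast\mp\iu)$ with the sign matched accordingly. The rest is routine induction.
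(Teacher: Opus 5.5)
Your argument is correct and follows the same route as the paper, which compresses it into the single sentence before the lemma (deficiency vectors are the $\ell^2$ solutions of $\Jop u=\pm\iu u$, and invertibility of the $A_n$ forces every such solution to be $(P_n(\pm\iu)x)_n$); you have only made the maximal-domain description of $J^\ast$ and the induction explicit. One correction to your closing remark: since $(J+\iu)^\ast=J^\ast-\iu$, the space $\Ran(J+\iu)^\perp=\Ker(J^\ast-\iu)$ corresponds to the eigenvalue $+\iu$, not $-\iu$. With the paper's convention $\dspace_\pm=\Ran(J\pm\iu)^\perp$, the labels $\pm\iu$ in the lemma therefore agree with your main computation as they stand, and no reinterpretation is needed.
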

For a block Jacobi operator, one thus obtains $0 \leq n_\pm \leq L$.
On the other hand, for all numbers $0 \leq n_\pm \leq L-1$ there exists a block Jacobi operator with deficiency indices $n_\pm$ \cite{dyukarev-deficiencynumberssymmetric-2006}; in particular, block Jacobi operators with unequal deficiency indices exist.
Determining the dimension of $\dspace_\pm$ directly from \cref{eq:dspace-poly} is, however, difficult in general.

\subsection{Why counting solutions is not enough}\label{sec:counting}
We now explain why determining the deficiency indices of a block Jacobi operator directly from \cref{lem:dspace}, whether numerically or analytically, is not straightforward.
Reformulating \cref{lem:dspace}, consider the generalized eigenvalue equation
\begin{equation}
    \label{eq:generalized-eigenvalue-vector}
    A_n u_{n+1}+B_n u_n + A_{n-1}^\ast u_{n-1} = \pm\iu u_n \, , \quad n \geq 1 \, ,
\end{equation}
with $u \in \ell(\nnum,\cnum^L)$ (cf. \Cref{eq:generalized-eigenvalue}).
We denote by $K_\pm \subset \ell(\nnum,\cnum^L)$ the space of all solutions of \cref{eq:generalized-eigenvalue-vector}, and set
\begin{equation}
    K^p_\pm = \{ u \in K_\pm \, : \, u_1 =  -A_0^{-1}(B_0 \mp \iu) u_0 \} \, .
\end{equation}
We thus obtain $\dim K^p_\pm = L$ and
\begin{equation}
    \dspace_\pm = \{u \in K^p_\pm \, : \, u \in \ell^2(\nnum,\cnum^L)\} \, .
\end{equation}
For a given initial condition $u_0 \in \cnum^L$, it is easy to check numerically whether the associated solution $u \in K^p_\pm$ is square summable.
If, however, $\Jop$ is not maximally indeterminate---if $n_+ < L$, say---the numerical analysis of $\dspace_+$ becomes intractable: the square-summable solutions then form an $n_+$-dimensional subspace of $K^p_+$, a set of measure zero, and locating it numerically is unstable.
Counting square-summable solutions in $K^p_\pm$ is therefore not a viable route to the deficiency indices.

Another approach is to determine $2L$ linearly independent asymptotic solutions $u \in K_\pm$ of \cref{eq:generalized-eigenvalue-vector} analytically, using Birkhoff's theorem~\cite{birkhoff-formaltheoryirregular-1930,birkhoff-analytictheorysingular-1933,wimp-resurrectingasymptoticslinear-1985} or WKB methods \cite{olver-asymptoticsspecialfunctions-2010,fedoriuk-asymptoticanalysislinear-1993}.
Analyzing whether these solutions are square summable, however, is not sufficient.
First, not all $u \in K_\pm$ satisfy the initial condition $u_1 =  -A_0^{-1}(B_0 \mp \iu) u_0$, so counting the linearly independent solutions $u \in K_\pm \cap \ell^2(\nnum,\cnum^L)$ does not suffice.
Second, if $u,v \in K_\pm$ are linearly independent and neither is square summable, $u+v$ may nevertheless be square summable.
Counting the square-summable asymptotic solutions is therefore not guaranteed to yield the correct result.
Finally, both the statement and the proof of Birkhoff's theorem are notoriously difficult to parse~\cite{wimp-resurrectingasymptoticslinear-1985,wong-asymptoticexpansionssecondorder-1992}, and the reduction to the well-understood three-term case~\cite{wong-asymptoticexpansionssecondorder-1992,elaydi-introductiondifferenceequations-2005} used for higher-order squeezing operators in \cite{fischer-selfadjointrealizationshigherorder-2026} is not available for general polynomials.
All of this makes it difficult to obtain a general characterization of the deficiency spaces $\dspace_\pm$ by either numerical or analytical asymptotic methods.

The Weyl disk approach developed in \cref{sec:weyl-disks}, by contrast, is numerically stable, as it does not depend on selecting the correct initial conditions.
Instead, it produces a sequence of radius matrices $R_n^\pm$ converging monotonically to a limiting matrix $R^\pm$, and determining the deficiency indices reduces to counting the nonzero eigenvalues of $R^\pm$.

\subsection{Weyl disks and deficiency indices}\label{sec:weyl-disks}
We now present an approach generalizing the scalar Weyl's limit-point/limit-circle dichotomy to the matrix case, referring to~\cite{schulz-baldes-geometryweyltheory-2010} for a full development of the theory.

To begin with, we consider the truncation of \cref{eq:def-block-jacobi} after the $n$th block, with a boundary matrix $T\in\Mat(L,\cnum)$ added at the last site:
\begin{equation}
        J_n(T) = \begin{pmatrix}
            B_0 & A_0 & & & & & \\
            A_0^\ast & B_1 & A_1 &  &  &  & \\
            & A_1^\ast & B_2 & A_2 &  &  & \\
            &  & A_2^\ast & B_3 &   \ddots & & \\
            &  &  &   \ddots  & \ddots & \ddots & \\
            &  &  &  & \ddots & B_{n-1} & A_{n-1} \\
            & &   &  & &  A_{n-1}^\ast & B_n + T 
           \end{pmatrix} \, .
\end{equation}
For Hermitian $T$, $J_n(T)$ is a finite-dimensional Hermitian matrix, and hence its resolvent $(J_n(T)-z)^{-1}$ exists for all complex energies $z \in \cnum \setminus \rnum$.
Of particular interest is the first diagonal block of the resolvent,
\begin{equation}
    G_n(z,T) = \pi_0 (J_n(T)-z)^{-1} \pi_0^\ast \in \Mat(L, \cnum) \, ,
\end{equation}
where $\pi_0$ is the projection onto the first block.
The matrix-valued function $G_n(z,T)$ is known as the Green function
of $J_n(T)$, and effectively contains its entire spectral theory, as does the Weyl--Titchmarsh function of scalar Jacobi and Sturm--Liouville operators (cf.~\cite{schmudgen-unboundedselfadjointoperators-2012}).
As the $J_n(T)$ are finite-dimensional approximations of $\Jop$, we are interested in how the boundary matrix $T$ influences the Green function $G_n(z,T)$ as $n \to \infty$.
For simplicity, we restrict ourselves to the case $\Imag z>0$; the case $\Imag z < 0$ is analogous up to signs, cf.~\cite{schulz-baldes-geometryweyltheory-2010}.
We shall also consider boundary matrices $T$ from the matrix upper half-plane
\begin{equation}
    \matupper(L) = \left\{ T \in \Mat(L,\cnum) \, : \, \iu (T^\ast-T) > 0\right\} \, .
\end{equation}
The boundary of the matrix upper half-plane is a stratified space, but, as in the scalar case, its maximal boundary, in the sense of~\cite{schulz-baldes-geometryweyltheory-2010}, consists exactly of the Hermitian matrices $T = T^\ast$, which we denote by $\partial\matupper(L)$. The Green function then admits the following characterization:
\begin{proposition}
    \label{prop:weyl-sphere}
    For $\Imag z > 0$ and $n \geq 2$, let 
    \begin{equation}
        \sphere_n^z = \left\{ G_n(z,T) \, : \, T \in \matupper(L) \right\}
    \end{equation}
    be the $n$th Weyl disk associated with $\Jop$.
    Then the following holds:
    \begin{enumerate}[(i)]
        \item $\sphere_n^z \subset \matupper(L)$, i.e., the Weyl disk is contained in the matrix upper half-plane;
        \item $T\mapsto G_n(z,T)$ is a diffeomorphism between $\matupper(L)$ and $\sphere_n^z$;
        \item the Hermitian boundary matrices $T=T^\ast$ correspond, up to a set of measure zero $N$, to the maximal boundary $\partial \sphere_n^z$ of $\sphere_n^z$, i.e.,
        \begin{equation}
            \partial  \sphere_n^z = \left\{ G_n(z,T) \, : \, T \in \Mat(L,\cnum), \, T^\ast = T \right\}\cup \,N \, .
        \end{equation}
    \end{enumerate}
\end{proposition}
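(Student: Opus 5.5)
Write $T\mapsto G_n(z,T)$ as a matrix Möbius transformation built from transfer matrices, and read off (i)--(iii) from the geometry of such maps.

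\emph{Step 1: Schur complement.} Write $J_n(T)-z$ in $2\times2$ block form, separating the first $n$ blocks (sites $0,\dots,n-1$) from the last block (site $n$). The Schur complement onto the last site shows that $G_n(z,T)$ depends on $T$ only through $(B_n+T-z) - A_{n-1}^\ast\,\pi_{n-1}(J_{n-1}-z)^{-1}\pi_{n-1}^\ast A_{n-1}$. Iterating this down to site $0$ yields a continued-fraction form. Equivalently, since the $A_k$ are invertible, use the transfer matrices
\[
\mathcal{T}_k(z)=\begin{pmatrix}A_k^{-1}(z-B_k) & -A_k^{-1}A_{k-1}^\ast\\ \id & 0\end{pmatrix}.
\]
The solution of the recurrence with data $(U_0,U_{-1})$ then satisfies $(J_n(T)-z)U=\pi_0^\ast\cdot$ (source), provided the boundary condition at site $n$ reads $A_nU_{n+1}=TU_n$ (with $A_n$ replaced by $\id$ in the truncation).

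This gives $G_n(z,T)=(C\,T+D)(E\,T+F)^{-1}$, or its inverse form, where $\begin{pmatrix}C&D\\E&F\end{pmatrix}$ is built from $P_n(z)$, $P_{n-1}(z)$ and the second-kind polynomials. Denote the product of the $\mathcal{T}_k$ by $\mathcal{M}_n(z)$.

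\emph{Step 2: symplectic/Lorentz structure.} Set $\mathcal{J}=\begin{pmatrix}0&-\id\\ \id&0\end{pmatrix}$ (suitably weighted by $A$'s). The Christoffel--Darboux/Green identity gives
\[
\mathcal{M}_n(z)^\ast\,\iu\mathcal{J}\,\mathcal{M}_n(z)=\iu\mathcal{J}+2\,\Imag z\sum_{k<n}(\cdots)^\ast(\cdots)\ge\iu\mathcal{J},
\]
with strict inequality for $\Imag z>0$ since $n\ge2$ and the $A_k$ are invertible. This is the matrix analogue of the fact that the Möbius map sends the upper half-plane strictly into itself.

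\emph{Proof of (i).} Take $\psi=(J_n(T)-z)^{-1}\pi_0^\ast x$. Computing $\Imag\langle\psi,(J_n(T)-z)\psi\rangle$ gives
\[
\Imag\langle x,G_n(z,T)^\ast x\rangle\cdot(\text{sign}) = \Imag z\,\|\psi\|^2 + \langle\psi_n,(-\Imag T)\psi_n\rangle .
\]
For $T\in\matupper(L)$ both terms have a definite sign, and $\|\psi\|>0$ whenever $x\neq0$. This yields $\iu(G^\ast-G)>0$, i.e.\ $G_n(z,T)\in\matupper(L)$. The same computation also shows that $J_n(T)-z$ is invertible for such $T$, so $G_n$ is well defined on $\matupper(L)$.

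\emph{Proof of (ii).} From the Möbius form, $T\mapsto G_n(z,T)$ is real-analytic. It is injective because $\mathcal{M}_n(z)$ is invertible (the $A_k$ are invertible): $G$ determines $T$ through the inverse fractional transformation. Its inverse is again a fractional map, hence smooth on the image. So the map is a diffeomorphism onto its image $\sphere_n^z$ by definition. The one point needing care is that $ET+F$ is invertible for every $T\in\matupper(L)$; this follows from (i)'s argument, i.e.\ from invertibility of $J_n(T)-z$.

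\emph{Proof of (iii).} This is the main obstacle. Extend the map continuously to the closure of $\matupper(L)$. For Hermitian $T$, $J_n(T)$ is Hermitian, so $G_n(z,T)$ is defined and lies on $\partial\sphere_n^z$ by continuity and openness of the diffeomorphic image. Conversely, I would characterize the maximal boundary of $\matupper$ as the Lagrangian planes $\{(T,\id)\}$ and note that $\mathcal{M}_n(z)$ maps Lagrangian planes to boundary points of the image ball. The exceptional set $N$ arises from Lagrangian planes not of graph form $(T,\id)$, i.e.\ those where $\ker(\cdot)\ne0$ and the boundary matrix is ``infinite''. These form a lower-dimensional subset of the Lagrangian Grassmannian, hence a null set. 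Here I would rely on the stratification results of~\cite{schulz-baldes-geometryweyltheory-2010} rather than reprove them.
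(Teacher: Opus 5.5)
The paper gives no proof of this proposition; it is quoted from \cite{schulz-baldes-geometryweyltheory-2010}. Your route (a fractional-linear form via transfer matrices, a Green identity for positivity, Lagrangian planes for the boundary) is the standard one from that reference. The gap is in the positivity step of (i), on which your (ii) also rests.

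Your own identity reads $\Imag\langle x,G_n(z,T)\,x\rangle=\Imag z\,\|\psi\|^2-\langle\psi_n,(\Imag T)\,\psi_n\rangle$. For $T\in\matupper(L)$, i.e.\ $\Imag T>0$, the two terms have \emph{opposite} signs, so neither $\Imag G_n(z,T)>0$ nor invertibility of $J_n(T)-z$ follows. Indeed, the imaginary part of $J_n(T)-z$ is $(0\oplus\dots\oplus 0\oplus\Imag T)-\Imag z$, which is indefinite.

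This cannot be patched, because with the conventions as written the claim is false. Take $L=1$, $A_k=1$, $B_k=0$, $n=2$, $z=\iu$. Then $\det(J_2(t)-\iu)=3\iu-2t$ and $G_2(\iu,t)=(\iu t+2)/(2t-3\iu)$. This has a pole at $t=3\iu/2\in\matupper(1)$, and it gives $G_2(\iu,7\iu/4)=-\iu/2\notin\matupper(1)$. Real $t$ are mapped onto the circle $|G-7\iu/12|=1/12$. It is the \emph{lower} half-plane that fills the interior of this circle (e.g.\ $G_2(\iu,-\iu)=3\iu/5$), while $\matupper(1)$ is sent to the exterior.

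What is needed is the opposite sign convention: let $T$ range over $-\matupper(L)$, or equivalently put $B_n-T$ in the last block. Then $\Imag(J_n(T)-z)\le-\Imag z<0$, so $\|(J_n(T)-z)\psi\|\ge\Imag z\,\|\psi\|$, which gives invertibility. Both terms in your identity are then nonnegative, with $\|\psi\|>0$ for $x\neq0$, giving $\iu(G^\ast-G)>0$. Your argument for (ii) then goes through as written.

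Part (iii) is insensitive to the flip, since the Hermitian matrices are the maximal boundary of both half-planes. Two points remain there. First, placing $G_n(z,T)$, $T=T^\ast$, on $\partial\sphere_n^z$ rather than in the interior needs injectivity of the fractional map on its whole domain of definition, not just continuity; invertibility of $\mathcal{M}_n(z)$ provides this. Second, the converse inclusion up to the null set $N$ still rests on \cite{schulz-baldes-geometryweyltheory-2010}, as in the paper.
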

We have thus obtained a geometric description of the influence of the boundary matrix $T$ on the truncation $J_n(T)$:
as $T$ ranges over all Hermitian matrices, $G_n(z,T)$ ranges over the boundary $\partial \sphere_n^z$ of the Weyl disk $\sphere_n^z$.
We now show that the term \textit{disk} is justified, starting with the radius matrices.
\begin{lemma}
    \label{lem:radius}
    Let $J$ be a block Jacobi operator defined as in \cref{eq:def-block-jacobi} and $P(z)$ the associated orthogonal polynomials of the first kind defined in \cref{eq:poly-p}.
    For $n \in \nnum$ and $\Imag z \neq 0$, let the radius matrix $R_n^z$ be given by
    \begin{equation}
        R_n^z = \left(2 |\Imag z| \sum_{k = 0}^n P_k(z)^\ast P_k(z) \right)^{-1} \, .
    \end{equation}
    Then the following holds:
    \begin{enumerate}[(i)]
        \item the $R_n^z$ are well defined, strictly positive definite Hermitian matrices;
        \item $R_{n+1}^z \leq R_n^z$ and $R_{n+2}^z < R_n^z$.
    \end{enumerate}
\end{lemma}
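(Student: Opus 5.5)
The plan is to work with the partial sums $S_n^z = \sum_{k=0}^n P_k(z)^\ast P_k(z)$, so that $R_n^z = (2|\Imag z| S_n^z)^{-1}$, and to derive both statements from properties of $S_n^z$.

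For (i), each summand $P_k(z)^\ast P_k(z)$ is Hermitian and positive semidefinite, so $S_n^z$ is Hermitian and positive semidefinite. Since $P_0(z)=\id$, the $k=0$ term gives $S_n^z \geq \id > 0$. Hence $S_n^z$ is invertible and strictly positive definite. Because $\Imag z \neq 0$, the prefactor $2|\Imag z|$ is strictly positive, so $R_n^z$ is well defined, Hermitian and strictly positive definite.

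For (ii), I will use the fact that on strictly positive definite matrices, inversion is operator monotone decreasing. That is, $0 < X \leq Y$ implies $Y^{-1} \leq X^{-1}$, and strict inequality is preserved. To see this, write $X^{-1/2} Y X^{-1/2} \geq \id$, invert to get $X^{1/2} Y^{-1} X^{1/2} \leq \id$, and conjugate by $X^{-1/2}$. The same chain works with strict inequalities, since a positive definite $C \geq \id$ with $C \neq \id$ in the strict sense, i.e. $C > \id$, has $C^{-1} < \id$. Now $S_{n+1}^z - S_n^z = P_{n+1}(z)^\ast P_{n+1}(z) \geq 0$, which gives $R_{n+1}^z \leq R_n^z$.

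For the strict inequality $R_{n+2}^z < R_n^z$, I need
\[
S_{n+2}^z - S_n^z = P_{n+1}(z)^\ast P_{n+1}(z) + P_{n+2}(z)^\ast P_{n+2}(z) > 0 .
\]
This is the main step. Suppose $x \in \cnum^L$ is nonzero and is annihilated by the left-hand side. Then $P_{n+1}(z)x = 0$ and $P_{n+2}(z)x = 0$. The vector sequence $u_k = P_k(z)x$ solves the recurrence \cref{eq:generalized-eigenvalue}, and the $A_k$ are invertible. So the recurrence can be run backwards, $u_{k-1} = (A_{k-1}^\ast)^{-1}\bigl((z-B_k)u_k - A_k u_{k+1}\bigr)$, and two consecutive vanishing entries force $u_k = 0$ for all $k \leq n+2$. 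In particular $u_0 = P_0(z)x = x = 0$, a contradiction.

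For the edge case $n = 0$, the backward step from index $1$ to index $0$ uses the boundary relation $P_1(z) = A_0^{-1}(z-B_0)$ rather than \cref{eq:generalized-eigenvalue}. Here $P_1(z)x = 0$ together with $P_2(z)x = 0$ and the $n=1$ recurrence $A_1 P_2 + B_1 P_1 + A_0^\ast P_0 = z P_1$ gives $A_0^\ast x = 0$, hence $x=0$. So the edge case works as well.

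With $S_{n+2}^z > S_n^z$ established, strict operator monotonicity of inversion yields $R_{n+2}^z < R_n^z$, with the positive scalar $2|\Imag z|$ playing no role. The only subtlety I expect is justifying why one step is not enough: a single $P_{n+1}(z)$ may be singular, so only the two-step difference is guaranteed to be strictly positive.
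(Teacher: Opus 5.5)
Your proof is correct. The paper gives no proof of its own for this lemma: it quotes it from the Weyl theory of \cite{schulz-baldes-geometryweyltheory-2010}, where the radius matrices come out of the Wronskian quadratic form built from the transfer matrices, and where monotonicity and nesting are derived inside that geometric framework. Your route uses only the explicit sum formula:
\begin{itemize}
\item $P_0=\id$ for (i);
\item operator monotonicity of inversion for the non-strict half of (ii);
\item backward recursion, showing $\Ker\bigl(P_{n+1}(z)^\ast P_{n+1}(z)+P_{n+2}(z)^\ast P_{n+2}(z)\bigr)=\{0\}$, for the strict half.
\end{itemize}
This is elementary, self-contained, and enough for the lemma as stated. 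What it does not supply is the interpretation of $R_n^z$ as the radius of the disks $\sphere_n^z$ (\cref{prop:radius-center-sphere}), which is what the cited theory provides.

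Two small corrections. First, your $n=0$ paragraph is redundant, and its stated reason is off. The step from index $1$ to index $0$ uses \cref{eq:generalized-eigenvalue} at $n=1$, not the boundary relation. It is therefore already covered by your general backward recursion, as your own computation there shows.

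Second, and more substantively, your closing remark is false: for $\Imag z\neq 0$ no $P_m(z)$ is singular. Suppose $P_m(z)x=0$ with $x\neq 0$. The recurrence, including its first row \cref{eq:poly-p}, together with $P_m(z)x=0$, says exactly that $(P_0(z)x,\dots,P_{m-1}(z)x)$ is an eigenvector of the Hermitian matrix $J_{m-1}(0)$ with eigenvalue $z$. This eigenvector is nonzero because its first entry is $x$, so $z\in\rnum$, a contradiction. Hence $P_{n+1}(z)^\ast P_{n+1}(z)>0$ already, and your own argument gives the stronger statement $R^z_{n+1}<R^z_n$. The two-step form of (ii) is simply weaker than necessary; it does not reflect any possible degeneracy of single steps.
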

In the literature, the radius matrices for $\Imag z < 0$ sometimes carry an additional sign, which we omit for ease of notation.
We also write $R^\pm_n = R^{\pm \iu}_n$ and $\sphere^\pm_n = \sphere^{\pm \iu}_n$, as in the main text.
Analogously to the radius matrices $R_n^z$, one can define center matrices $S_n^z$; since we do not need their explicit form, we refer to~\cite{schulz-baldes-geometryweyltheory-2010} for their definition.
The radius and center matrices yield the following characterization of the Weyl disks, which justifies the terminology:
\begin{proposition}
    \label{prop:radius-center-sphere}
    For $\Imag z > 0$ and $n \in \nnum$, the boundary of the Weyl disk is
    \begin{equation}
        \partial \sphere_n^z = \left\{ S_n^z + (R_n^z)^{1/2} W (R_n^{\bar z})^{1/2} \, : \, W \in U(L) \right\} \, ,
    \end{equation}
    where $U(L)$ denotes the group of $L\times L$ unitary matrices.
\end{proposition}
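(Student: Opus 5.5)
We express the Green function $G_n(z,T)$ as a matrix Möbius transformation of $T$, and then show that the image of the Hermitian matrices under this map is the stated "matrix circle".

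\textbf{Step 1: transfer matrices and a Möbius formula.} Alongside $P_n(z)$ we introduce the polynomials of the second kind $Q_n(z)$, solving the same recurrence \eqref{eq:generalized-eigenvalue} with $Q_0=0$ and $Q_1=A_0^{-1}$. We solve $(J_n(T)-z)u=\pi_0^\ast e$ by the ansatz $u_k=(Q_k(z)-P_k(z)G)e$ for $0\le k\le n$. Rows $1,\dots,n-1$ hold automatically, and row $0$ holds by the choice of initial conditions. The last row, with $T$ added, imposes
\begin{equation*}
A_n(\ldots)_{n+1}+T(\ldots)_n=0,
\end{equation*}
that is, $(A_nP_{n+1}+TP_n)G=A_nQ_{n+1}+TQ_n$. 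Hence
\begin{equation*}
G_n(z,T)=(A_nP_{n+1}(z)+TP_n(z))^{-1}(A_nQ_{n+1}(z)+TQ_n(z)),
\end{equation*}
which is a matrix Möbius map in $T$. Invertibility of the prefactor for Hermitian $T$ and $\Imag z>0$ follows from the resolvent's existence, which is already encoded in \cref{prop:weyl-sphere}.

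\textbf{Step 2: Green's identity and the Hermitian form.} Write $\Phi_n=\begin{pmatrix}P_n & Q_n\\ A_nP_{n+1} & A_nQ_{n+1}\end{pmatrix}$. Summation by parts for the recurrence (the discrete Christoffel–Darboux / Wronskian identity) gives
\begin{equation*}
\Phi_n(z)^\ast\mathcal{J}_0\Phi_n(z)-\mathcal{J}_0 = 2\iu\,\Imag z\sum_{k=0}^n (P_k,Q_k)^\ast(P_k,Q_k),
\end{equation*}
where $\mathcal{J}_0$ is the standard symplectic form (up to convention and sign). The condition "$T$ Hermitian" becomes the vanishing of the quadratic form $\begin{pmatrix}\id\\ T\end{pmatrix}^\ast\mathcal{J}_0\begin{pmatrix}\id\\ T\end{pmatrix}$. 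Pulling this back through $\Phi_n$ shows that $G=G_n(z,T)$ lies on the boundary exactly when
\begin{equation*}
\begin{pmatrix}-G\\ \id\end{pmatrix}^\ast \mathcal{M}_n \begin{pmatrix}-G\\ \id\end{pmatrix}=0,
\end{equation*}
with $\mathcal{M}_n=\frac{1}{2\Imag z}\sum_k(P_k,Q_k)^\ast(P_k,Q_k)-\text{(boundary term)}$. The upper-left block of $\mathcal{M}_n$ is $\frac{1}{2\Imag z}\sum_k P_k^\ast P_k=(R_n^z)^{-1}$ by \cref{lem:radius}.

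\textbf{Step 3: completing the square.} Completing the square in the quadratic matrix equation $G^\ast (R_n^z)^{-1}G-G^\ast C-C^\ast G+D=0$ (with a suitable $\iu$ prefactor) defines $S_n^z=R_n^zC$. The equation then reads
\begin{equation*}
(G-S)^\ast (R_n^z)^{-1}(G-S)=\Delta_n .
\end{equation*}
The key computation, and the main obstacle, is to identify $\Delta_n$ with $R_n^{\bar z}$. For this we use the symplectic identity $\Phi_n(\bar z)^\ast\mathcal{J}_0\Phi_n(z)=\mathcal{J}_0$, which holds because the $\bar z$ and $z$ solutions have constant Wronskian. This identity expresses the inverse of the $2\times2$ block form $\mathcal{M}_n(z)$ through $\mathcal{M}_n(\bar z)$, and the Schur complement of $\mathcal{M}_n(z)$ is then exactly the inverse of the upper-left block of $\mathcal{M}_n(\bar z)$, namely $R_n^{\bar z}$.

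The equation $X^\ast (R_n^z)^{-1}X=R_n^{\bar z}$, with both sides positive definite, is solved exactly by $X=(R_n^z)^{1/2}W(R_n^{\bar z})^{1/2}$ with $W$ unitary. Combined with \cref{prop:weyl-sphere}(iii), which states that the boundary is the image of the Hermitian matrices up to the null set $N$, this identifies $\partial\sphere_n^z$ with the stated set. The null set $N$ corresponds to those $W$ for which the Möbius map degenerates, and these are recovered by taking closure.
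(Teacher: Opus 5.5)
The paper gives no proof of this proposition; it takes it from \cite{schulz-baldes-geometryweyltheory-2010}. There the Weyl surface is identified with the maximally isotropic subspaces of the Wronskian form pulled back by the transfer matrix, and the center/radius formula follows by completing the square and relating the $z$ and $\bar z$ transfer matrices. Your outline is in substance that argument, and its key step is correct. Take $\mathcal{J}_0=\left(\begin{smallmatrix}0&\id\\-\id&0\end{smallmatrix}\right)$, for which your Green identity holds as written, and $\mathcal{M}_n(w)=\frac{1}{\iu}\Phi_n(w)^\ast\mathcal{J}_0\Phi_n(w)=2\Imag w\sum_k(P_k,Q_k)^\ast(P_k,Q_k)+\frac{1}{\iu}\mathcal{J}_0$. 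The relation $\Phi_n(\bar z)^\ast\mathcal{J}_0\Phi_n(z)=\mathcal{J}_0$ gives $\mathcal{M}_n(z)^{-1}=-\mathcal{J}_0\mathcal{M}_n(\bar z)\mathcal{J}_0$, hence $(\mathcal{M}_n(z)^{-1})_{22}=\mathcal{M}_n(\bar z)_{11}=-(R_n^{\bar z})^{-1}$. So the Schur complement is $d-b^\ast a^{-1}b=-R_n^{\bar z}$ and $\Delta_n=b^\ast a^{-1}b-d=R_n^{\bar z}$. Note the sign: the upper-left block of $\mathcal{M}_n(\bar z)$ is negative definite.

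Two pieces of bookkeeping need repair.

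\emph{Signs in Step 1.} Your ansatz has $u_0=-Ge$, so your $G$ is $-G_n(z,T)$. Also, the last row reads $A_nu_{n+1}=Tu_n$, not $A_nu_{n+1}+Tu_n=0$. The correct formula is $G_n(z,T)=-(A_nP_{n+1}-TP_n)^{-1}(A_nQ_{n+1}-TQ_n)$. Yours equals $-G_n(z,-T)$, which lies in the lower half-plane. For the boundary this is harmless, since $T\mapsto-T$ preserves Hermiticity, $W\mapsto-W$ preserves $U(L)$, and only the center changes sign; but it should be fixed.

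\emph{Normalization in Step 2.} By \cref{lem:radius}, $(R_n^z)^{-1}=2\Imag z\sum_kP_k^\ast P_k$, not $\frac{1}{2\Imag z}\sum_kP_k^\ast P_k$; use $\mathcal{M}_n$ normalized as above.

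\emph{Reverse inclusion.} Steps 1--3 only prove $\{G_n(z,T):T=T^\ast\}\subset\mathcal{S}$, where $\mathcal{S}$ is the right-hand side of the statement. The reverse inclusion should be argued directly rather than through the unspecified null set $N$ of \cref{prop:weyl-sphere}(iii).
\begin{itemize}
\item Take any solution $G$ of $\binom{-G}{\id}^\ast\mathcal{M}_n\binom{-G}{\id}=0$ and set $X=\Phi_n\binom{-G}{\id}$. It has rank $L$, because $\Phi_n$ is invertible by the symplectic relation.
\item Vanishing of the form says exactly that $X_1^\ast X_2$ is Hermitian. So whenever $X_1=Q_n-P_nG$ is invertible, $T=X_2X_1^{-1}$ is a Hermitian boundary matrix reproducing $G$ (in the corrected sign convention).
\item $X_1$ is affine in $W$, so $\det X_1$ is real-analytic on the connected group $U(L)$. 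It is not identically zero: at any point coming from a Hermitian $T$, $X_2=TX_1$ and full rank of $X$ force $X_1$ to be invertible.
\item Hence the exceptional set is closed and nowhere dense, and $\mathcal{S}$ is the closure of the image of the Hermitian matrices.
\end{itemize}
That is the precise content of your ``recovered by closure''.
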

As the radius matrices $R_n^z$ are strictly positive (cf.~\cref{lem:radius}), the $\sphere_n^z$ are thus matrix disks in $\matupper(L)$, with center $S_n^z$ and radii $R_n^z$ and $R_n^{\bar z}$, whose maximal boundaries are parametrized by the unitaries $W \in U(L)$. This parametrization also accounts for the set $N$ in \cref{prop:weyl-sphere}(iii): the Cayley transform maps the Hermitian matrices onto the unitaries without the eigenvalue $-1$, so the Hermitian boundary conditions reach all of $\partial\sphere_n^z$ but a set of measure zero.
By \cref{lem:radius}, their radii decrease monotonically, so the Weyl disks $\sphere_n^z$ shrink as $n \to \infty$.
In fact, a stronger statement holds:
\begin{lemma}
    \label{lem:nested}
    For $\Imag z > 0$ and $n \in \nnum$, let $\sphere_n^z$ be the $n$th Weyl disk as defined in \cref{prop:weyl-sphere}.
    Then $\sphere_{n+1}^z \subset \sphere_n^z$ and $\partial \sphere_{n-1}^z \cap \overline{\sphere_{n+1}^z} = \emptyset$, i.e., the Weyl disks are strictly nested.
\end{lemma}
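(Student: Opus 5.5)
We prove both statements by expressing the Green function through a one-step Möbius (transfer-matrix) recursion. Writing $J_{n+1}(T)$ as the $2\times 2$ block matrix obtained by splitting off the last block, the Schur complement formula gives
\begin{equation*}
    J_{n+1}(T) \text{ at site } n+1: \quad G_{n+1}(z,T) = G_n\bigl(z, \Phi_n(T)\bigr), \qquad \Phi_n(T) = -A_n\,(B_{n+1}+T-z)^{-1}A_n^\ast .
\end{equation*}
That is, eliminating the last site replaces the boundary matrix at site $n$ by $\Phi_n(T)$. (Here we use the convention that $G_n(z,T)$ is built from $J_n(T)$ with the shift $-z$ absorbed consistently; the precise signs will be matched to the definition of $\matupper(L)$.) Hence
\begin{equation*}
    \sphere_{n+1}^z = G_n\bigl(z,\Phi_n(\matupper(L))\bigr).
\end{equation*}
By the diffeomorphism in \cref{prop:weyl-sphere}(ii), inclusion reduces to showing $\Phi_n(\matupper(L))\subseteq\matupper(L)$.

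For this, take $T\in\matupper(L)$ and set $X=B_{n+1}+T-z$. Then $\iu(X^\ast-X)=\iu(T^\ast-T)+2\Imag z>0$, so $X$ is invertible with $-X^{-1}\in\matupper(L)$. This uses the identity
\begin{equation*}
    \iu\bigl((-X^{-1})^\ast-(-X^{-1})\bigr)=(X^{-1})^\ast\,\iu(X^\ast-X)\,X^{-1}.
\end{equation*}
Since $A_n$ is invertible, the congruence $Y\mapsto A_nYA_n^\ast$ preserves strict positivity, so $\Phi_n(T)\in\matupper(L)$. This gives $\sphere_{n+1}^z\subset\sphere_n^z$.

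For strict nesting we iterate twice: $\sphere_{n+1}^z=G_{n-1}\bigl(z,\Phi_{n-1}\Phi_n(\matupper(L))\bigr)$. It then suffices to show that $\Phi_{n-1}\circ\Phi_n$ maps $\overline{\matupper(L)}$ into a set whose closure stays a positive distance inside $\matupper(L)$, away from the Hermitian boundary. Indeed, $\partial\sphere_{n-1}^z$ is the image under $G_{n-1}$ of (a dense part of) the Hermitian matrices, together with limits thereof. The key estimate is that for $T$ with $\iu(T^\ast-T)\ge 0$ (including Hermitian $T$ and unbounded limits) we get
\begin{equation*}
    \iu\bigl(\Phi_n(T)^\ast-\Phi_n(T)\bigr)\ge 2\Imag z\,A_n(X^{-1})^\ast X^{-1}A_n^\ast>0 .
\end{equation*}
So a single step already lands strictly inside. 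However, the image is not bounded away from the boundary uniformly, since $X^{-1}$ can tend to $0$ as $T\to\infty$. Applying $\Phi_{n-1}$ once more handles this: large $\Phi_n(T)$ cannot occur because $\Phi_n(T)$ is bounded by $\norm{A_n}^2/\Imag z$, and then $\Phi_{n-1}(\Phi_n(T))$ has imaginary part bounded below by $2\Imag z\,A_{n-1}(X'^{-1})^\ast X'^{-1}A_{n-1}^\ast$ with $X'$ bounded. This gives a uniform positive lower bound $c\id$, and hence compactness of $\overline{\Phi_{n-1}\Phi_n(\overline{\matupper(L)})}$ inside $\matupper(L)$.

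The main obstacle is this last step: passing from "strictly inside" to "closure disjoint from $\partial\sphere_{n-1}^z$". It requires handling the stratified boundary of the disk, including points at infinity and the measure-zero set $N$. I would avoid $N$ entirely by working instead with the radius--center description of \cref{prop:radius-center-sphere}. The maximal boundary consists of the points $S+R^{1/2}WR'^{1/2}$ with $W$ unitary. A compact subset of $\matupper(L)$ mapped through the diffeomorphism $G_{n-1}$ then corresponds to contractions $W$ with $\norm{W}\le 1-\epsilon$, which never meet the unitaries. Alternatively, one can compare radii using \cref{lem:radius}(ii), since $R_{n+1}^z<R_{n-1}^z$. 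The case $\Imag z<0$ follows by the same argument with signs reversed.
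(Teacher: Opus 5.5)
Your key positivity computation has the wrong sign, so the first step fails as written. For $X=B_{n+1}+T-z$ with $\Imag z>0$ one has $\iu(X^\ast-X)=\iu(T^\ast-T)-2\Imag z$, which is indefinite on $\matupper(L)$. In fact $\Phi_n$ is not even defined on all of $\matupper(L)$, since $X=0$ at $T=z\id-B_{n+1}\in\matupper(L)$.

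The slip is inherited from the displayed definition of $\sphere_n^z$. With $T\in\matupper(L)$ and the resolvent $(J_n(T)-z)^{-1}$, \cref{prop:weyl-sphere}(i) cannot hold: already for $L=1$ one obtains the exterior of the Weyl circle, pole included. The consistent convention takes boundary matrices in the lower half-plane $\iu(T-T^\ast)>0$ (equivalently, $B_n-T$ in the corner), whose maximal boundary is again the Hermitian matrices. For every $T$ with $\iu(T-T^\ast)\ge0$ you then get $\iu(X-X^\ast)\ge 2\Imag z$, hence $\|X^{-1}\|\le 1/\Imag z$ and
\[
\iu\bigl(\Phi_n(T)-\Phi_n(T)^\ast\bigr)\ge 2\Imag z\,A_nX^{-1}(X^{-1})^\ast A_n^\ast>0 .
\]
So $\Phi_n$ maps the closed lower half-plane into the open one. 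Your two-step estimate then goes through verbatim with the half-planes flipped: first the bound $\|\Phi_n(T)\|\le\|A_n\|^2/\Imag z$, then a uniform lower bound on $\iu(Y-Y^\ast)$ for $Y=\Phi_{n-1}(\Phi_n(T))$. You flagged that signs remained to be matched, but they cannot be matched with $T\in\matupper(L)$. The convention has to be changed explicitly and this one line redone.

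With that repair your argument is correct and complete, and it takes a different route from the one the paper relies on. The paper gives no proof of this lemma; it imports it from \cite{schulz-baldes-geometryweyltheory-2010}. There the disks are described through the Möbius action of the transfer matrices and the Wronskian quadratic form, the same machinery that produces the radius and center matrices of \cref{lem:radius,prop:radius-center-sphere}.

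Your Schur-complement (Riccati) recursion is more elementary and self-contained, and it makes transparent why strictness needs two steps. Since $\Phi_n(-\iu t\id)\to0$ as $t\to\infty$, the point $G_n(z,0)$ lies in $\overline{\sphere_{n+1}^z}\cap\partial\sphere_n^z$, so consecutive disks do touch. What your route does not give is the explicit radius/center description. The same tangency shows that your fallback of comparing $R_{n+1}^z<R_{n-1}^z$ would not suffice on its own: inclusion plus strictly smaller radius does not exclude internal tangency.

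For the last step you need neither that fallback nor the parametrization of the interior by strict contractions $W$, which the paper does not state. The map $T\mapsto G_{n-1}(z,T)$ is continuous and injective on an open set, so $\sphere_{n-1}^z$ is open by invariance of domain, and $\partial\sphere_{n-1}^z$ lies in its topological boundary. Hence $\overline{\sphere_{n+1}^z}\subset G_{n-1}(z,K)\subset\sphere_{n-1}^z$, with $K$ your compact set, already gives the disjointness.
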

\Cref{lem:nested} implies that the limit disk
\begin{equation}
    \sphere^z = \bigcap_{n \in \nnum} \sphere_n^z
\end{equation}
always exists; it need not, however, have full dimension, i.e., the disks may collapse onto a lower-dimensional disk, or even onto a single point of $\matupper(L)$.
In the \textit{limit-point} case, the boundary matrix $T$ ceases to influence $G_n(z,T)$ as $n \to \infty$, and all information about the spectrum and dynamics of $\Jop$ is effectively contained in finite dimensions.
If, instead, a (possibly lower-dimensional) \textit{limit circle} survives, $T$ influences $G_n(z,T)$ for arbitrarily large $n$, and an explicit choice must be made.
The following proposition makes the connection with von Neumann's theory of deficiency indices explicit, taking both half-planes into account.
\begin{proposition}
    \label{prop:radius}
    Let $\Jop$ be a block Jacobi operator defined as in \cref{eq:def-block-jacobi}, and, for $n \in \nnum$, let $R_n^\pm = R_n^{\pm \iu}$ be the associated radius matrices from \cref{lem:radius}.
    Then the limits $R^\pm = \lim_{n \to \infty} R_n^\pm$ exist.
    Moreover, the following holds:
    \begin{enumerate}[(i)]
        \item $R^\pm$ are positive semidefinite;
        \item the deficiency indices of $J$ are given by $n_\pm = \dim \Ran R^\pm$.
    \end{enumerate}
\end{proposition}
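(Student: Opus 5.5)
The plan is to work entirely with the Hermitian matrices $Q_n^\pm = 2\sum_{k=0}^n P_k(\pm\iu)^\ast P_k(\pm\iu)$, so that $R_n^\pm = (Q_n^\pm)^{-1}$ by \cref{lem:radius} (here $|\Imag z|=1$).

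\emph{Existence of the limit and part (i).} By \cref{lem:radius}, $(R_n^\pm)_n$ is a monotonically nonincreasing sequence of strictly positive Hermitian matrices, hence bounded below by $0$. A monotone, bounded sequence of Hermitian matrices converges: for each $x\in\cnum^L$ the scalar sequence $\braket{x,R_n^\pm x}$ is nonincreasing and nonnegative, so it converges. Polarization then gives convergence of all matrix entries. The limit $R^\pm$ is Hermitian and satisfies $\braket{x,R^\pm x}\ge 0$, which proves (i).

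\emph{Identifying the kernel.} The main step is to show
\begin{equation}
    \Ker R^\pm = \Bigl\{x : \sum_{k=0}^\infty \norm{P_k(\pm\iu)x}^2 = \infty\Bigr\}\cup\{0\}
\end{equation}
in the appropriate sense, namely $\Ran R^\pm = V_\pm$, where
\begin{equation}
    V_\pm = \Bigl\{x\in\cnum^L : \sum_{k}\norm{P_k(\pm\iu)x}^2<\infty\Bigr\}.
\end{equation}
By \cref{lem:dspace}, the map $x\mapsto (P_n(\pm\iu)x)_n$ is injective, since $P_0=\id$. It therefore maps $V_\pm$ isomorphically onto $\dspace_\pm$, so $n_\pm=\dim V_\pm$, and it suffices to prove $\Ran R^\pm=V_\pm$.

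Since $Q_n^\pm$ is nondecreasing, it has a limit $Q^\pm$ in the extended sense: $\braket{x,Q_n^\pm x}\uparrow q(x)\in[0,\infty]$. The set $V_\pm=\{x: q(x)<\infty\}$ is a linear subspace, because $q(x+y)\le 2q(x)+2q(y)$. Decompose $\cnum^L = V_\pm\oplus V_\pm^\perp$.

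\emph{Step (a): $V_\pm^\perp\subset \Ker R^\pm$.} For $x\perp V_\pm$, minimize over vectors of the form $x+v$ with $v\in V_\pm$. Use the variational formula
\begin{equation}
    \braket{y,(Q_n)^{-1}y}=\sup_w \frac{|\braket{y,w}|^2}{\braket{w,Q_nw}}.
\end{equation}
For $y=x$, take $w=v+x$ for any $w$ with nonzero component along $x$. The quadratic form $\braket{w,Q_nw}$ restricted to $\{w: \braket{x,w}=1\}$ tends to infinity uniformly on compacts, but this needs care. Cleaner: the quadratic form $q_n$ restricted to $V_\pm$ converges to a finite form $q|_{V_\pm}$. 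On the quotient $\cnum^L/V_\pm$, the form $\inf_{v\in V_\pm} q_n(w+v)$ is a nondecreasing sequence of forms on a finite-dimensional space. Each nonzero class has limit $+\infty$; otherwise one can extract, via compactness of the unit sphere in the quotient and monotonicity, a vector in $V_\pm$ outside $V_\pm$, a contradiction. Then
\begin{equation}
    \braket{x,R_n x}=\sup_w |\braket{x,w}|^2/q_n(w)\to 0,
\end{equation}
since $|\braket{x,w}|\ne0$ forces the class of $w$ to be nonzero, with the uniformity supplied by the Dini-type argument on the compact sphere.

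\emph{Step (b): $R^\pm$ is injective on $V_\pm$ in the sense $\Ran R^\pm\supseteq V_\pm$.} For $x\in V_\pm$ with $x\neq 0$, the same variational formula with $w=x$ gives
\begin{equation}
    \braket{x,R_nx}\ge \norm{x}^4/q_n(x)\ge \norm{x}^4/q(x)>0.
\end{equation}
Hence $\braket{x,R^\pm x}>0$.

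Combining (a) and (b): $R^\pm$ vanishes on $V_\pm^\perp$ and is positive on $V_\pm$. Since it is Hermitian, $\Ran R^\pm = (\Ker R^\pm)^\perp = V_\pm$. This gives $\dim\Ran R^\pm = n_\pm$.

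The main obstacle I expect is the uniformity argument in step (a). Pointwise divergence of $q_n$ off $V_\pm$ does not by itself force $\braket{x,R_nx}\to0$. I would handle it by a monotone-forms compactness argument on the unit sphere of $V_\pm^\perp$, applied to the Schur-complement forms $\inf_{v\in V_\pm}q_n(\cdot+v)$: these are nondecreasing and continuous, so their minimum over the sphere tends to infinity by Dini-type reasoning. Alternatively, one could simply cite the corresponding statement in \cite{schulz-baldes-geometryweyltheory-2010}.
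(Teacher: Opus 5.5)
The paper gives no proof of this proposition. It is quoted from matrix-valued Weyl theory, with the reader referred to \cite{schulz-baldes-geometryweyltheory-2010,dyukarev-deficiencynumberssymmetric-2006,moszynski-blockjacobimatrices-2025}; this is the citation route you mention at the end. Your argument is a correct, self-contained alternative that uses only \cref{lem:radius} and \cref{lem:dspace}:
\begin{enumerate}[(i)]
    \item monotone convergence gives $R^\pm$ and part (i);
    \item injectivity of $x\mapsto(P_n(\pm\iu)x)_n$ (from $P_0=\id$) gives $n_\pm=\dim V_\pm$;
    \item the variational formula $\braket{y,Q_n^{-1}y}=\sup_w|\braket{y,w}|^2/\braket{w,Q_nw}$ together with finite-dimensional compactness gives $\Ker R^\pm=V_\pm^\perp$.
\end{enumerate}
This is the direct matrix analogue of the scalar fact $r=0\iff\sum_k|P_k|^2=\infty$, and it avoids centers, disks and boundary matrices altogether. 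What the cited geometric theory adds is the picture the main text relies on for interpretation: the limit disk as a lower-dimensional matrix ball, and the parametrization of boundary conditions. Your proof neither addresses nor needs this.

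Two points need tightening.

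\emph{Your first display is false as written.} The set $\{x:\sum_k\norm{P_k(\pm\iu)x}^2=\infty\}\cup\{0\}$ is in general not a subspace: it contains $u$ and $u+v$ for $0\neq u\perp V_\pm$ and $0\neq v\in V_\pm$, but not their difference $v$. The correct statement, and the one you actually prove, is $\Ker R^\pm=V_\pm^\perp$.

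\emph{Step (a) relies on a different compactness than the one you name.} The pointwise divergence of $\tilde q_n(u)=\inf_{v\in V_\pm}q_n(u+v)$ does not come from compactness of a sphere in the quotient. It comes from boundedness of the minimizers $v_n$, which holds because $q_n\ge q_0=2\norm{\cdot}^2$. A limit point $v$ of the $v_n$ then satisfies $q_m(u+v)\le C$ for every $m$, so $u+v\in V_\pm$, a contradiction. With this in place your Dini step goes through.

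Step (a) can also be done without Schur complements or Dini. For $w\in\cnum^L$ set $y_n=R_n^\pm w\to y=R^\pm w$. Then $q_n(y_n)=\braket{w,R_n^\pm w}\le\braket{w,R_0^\pm w}$, hence $q_m(y_n)\le\braket{w,R_0^\pm w}$ for all $m\le n$. Letting $n\to\infty$ and then $m\to\infty$ gives $q(y)<\infty$, so $\Ran R^\pm\subseteq V_\pm$. Your step (b), $\Ker R^\pm\cap V_\pm=\{0\}$, then forces $\Ran R^\pm=V_\pm$ by counting dimensions.
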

    Similarly, one can define the limit center matrices $S^z$, the limiting Green functions $G(z,T)$, and the associated Nevanlinna functions.
    For a more thorough development, we refer the reader to~\cite{schulz-baldes-geometryweyltheory-2010,dyukarev-deficiencynumberssymmetric-2006,moszynski-blockjacobimatrices-2025}.
    
We have thus obtained a geometric interpretation of the deficiency indices.
If $n_+ = n_- = 0$ ($J$ is essentially self-adjoint), the Weyl disks, which describe the possible Green functions $G_n(z,T)$ for a boundary matrix $T$ at site $n$, collapse to a single point as $n \to \infty$, and no boundary information is needed in the infinite-dimensional limit.
If $n_+ = n_- \neq 0$ ($J$ has multiple self-adjoint extensions), the Weyl disks do not collapse to a single point, and different boundary matrices yield different Green functions, and thus different spectra and different dynamics.
In this case, $n_\pm$ is the dimension of the limit disk $\sphere^\pm$; in particular, the deficiency spaces $\dspace_\pm$ are given explicitly in terms of $R^\pm$ as
\begin{equation}
    \dspace_\pm  = \left\{ P(\pm \iu) x \, : \, x \in (\Ker R^\pm)^\perp \right\} \, .
\end{equation}
Instead of by boundary matrices $T$, the self-adjoint extensions of $J$ can then be parametrized explicitly, following von Neumann, by the unitaries $V: \dspace_+ \to \dspace_-$.
Finally, if $n_+ \neq n_-$ ($J$ has no self-adjoint extensions), the limit disks $\sphere^+$ and $\sphere^-$ have different dimensions, and no single Hermitian boundary condition can be chosen for both half-planes.

\subsection{Probability flux at infinity}\label{sec:flux}
Essential self-adjointness, and its failure, can also be interpreted in terms of the probability flux to infinity in finite time.
Let $J$ be a block Jacobi operator with equal deficiency indices, let $J_V$ be any of its self-adjoint extensions, and let $\psi(t) = \e^{-\iu J_V t} \psi(0)$ be the time evolution under $J_V$ of an initial state $\psi(0) \in \domain(J_V)$.
For $n \in \nnum$, let moreover $\Pi_n = \sum_{k = 0}^n \pi_k^\ast\pi_k$ be the projector onto the first $n+1$ blocks.
The rate of change of the probability contained in these blocks is
\begin{equation}\label{eq:flux}
    F_n(t) = \diff*{}{t}\braket{\psi(t),\Pi_n \psi(t)} \, ,
\end{equation}
and we are interested in its limit $F(t) = \lim_{n \to \infty} F_n(t)$, which quantifies the probability flux to infinity in finite time.
An explicit calculation shows that 
\begin{equation}
    F_n(t) = \iu\braket{\psi(t),[J_V, \Pi_n] \psi(t)} = \iu \left(\braket{A_n \psi_{n+1}(t), \psi_n(t)}-\braket{\psi_{n}(t),A_n\psi_{n+1}(t)}\right) = \wronsk(\Psi_n(t),\Psi_n(t))\, ,
\end{equation}
where the Wronskian $\wronsk(x,y)$ of $x,y \in \cnum^{2L}$ and the vectors $\Psi_n(t)$ are given by
\begin{equation}
    \wronsk(x,y) = \frac{1}{\iu} x^\ast \begin{pmatrix}
        0 & -\id \\ \id & 0 
    \end{pmatrix}y \, , \quad \Psi_n(t) = \begin{pmatrix}A_n \psi_{n+1}(t)\\\psi_n(t)\end{pmatrix}\, .
\end{equation}
The Wronskian $\wronsk$ is closely related to a geometric description of the Weyl disks $\sphere_n^z$ in terms of isotropic subspaces of quadratic forms; see~\cite{schulz-baldes-geometryweyltheory-2010} for details.
In particular, $\lim_{n \to \infty} \wronsk(\Psi_n,\Psi_n) = 0$ for all $\psi \in \domain(J_V)$
if and only if $J$ is essentially self-adjoint, i.e., if and only if $J_V$ is the unique self-adjoint extension of $J$.
This yields a probabilistic interpretation of essential self-adjointness.
If $J$ is essentially self-adjoint, the probability flux to infinity in finite time, $F(t)$, always vanishes, and no additional information needs to be supplied.
If, instead, $J$ is not essentially self-adjoint but has equal deficiency indices, $F(t)$ is nonzero for some initial states $\psi(0)$, and an explicit boundary condition $V$ is needed to fix how a wave packet is reflected at infinity.
A similar interpretation in terms of the corresponding classical Hamiltonian flow is given in~\cite{zhu-classicalsymptomsquantum-1993}.

\section{From bosonic polynomials to block Jacobi operators}\label{sec:pm-domains}

This appendix connects the preceding one to the operators of the main text.
In \cref{sec:domains} we settle what $p_M$ \emph{is}, showing that all natural
choices of domain lead to the same deficiency indices; in \cref{sec:reduction}
we exhibit a unitary equivalence between $p_M$ and a block Jacobi operator; and
in \cref{sec:main-result} we combine the two with \cref{prop:radius} to obtain
\cref{cor:main}, the statement the algorithm implements.

\subsection{The choice of domain}\label{sec:domains}
The main text treats $p_M$ as an operator on $L^2(\rnum)$.
Strictly speaking, \cref{eq:def-pm} is a formal
polynomial in $a$ and $a^\dagger$,
\begin{equation}
   \label{eq:def-pm}
   p_M = \sum_{k,l=0}^s M_{k,l} (a^\dagger)^k a^l \, ,
\end{equation}
with $M \in \Mat(s+1,\cnum)$, and it becomes an operator only once a dense
domain of vectors on which it acts has been fixed.
Which vectors these should be is not dictated by the physics.
Writing $p_M$ in terms of $a$ and $a^\dagger$ suggests working with the Fock
states, on which the action of \cref{eq:action-beta} is elementary; writing the
same operator as a polynomial in $x$ and $p$ instead suggests the domains familiar
from the theory of differential operators.
At least four natural candidates therefore present themselves, and a priori
they define four different operators.

The following lemma shows that they do not: all four have the same closure, and
hence the same deficiency indices and the same self-adjoint extensions.
The answer to the question posed in the main text---how many unitary evolutions does $p_M$
generate?---is thus the same for any of these choices, and
depends on $M$ alone.

\begin{lemma}
    \label{lem:all-domains}
    Let $M \in \Mat(s+1,\cnum)$ be Hermitian, and let $p_M$ be the formal
    polynomial in \cref{eq:def-pm}.
    Consider $p_M$ as an operator on $L^2(\rnum)$ with any one of the following
    domains:
    \begin{enumerate}[(i)]
        \item the finite linear span of the Fock states $(\phi_n)_{n \in \nnum}$;
        \item the Schwartz space $\schwartz(\rnum)$;
        \item the smooth compactly supported functions $C_0^\infty(\rnum)$;
        \item for any $k \geq 2s$, the vectors
        $\sum_{n = 0}^\infty c_n \phi_n$ with
        $\sum_{n = 0}^\infty n^k |c_n|^2 < \infty$.
    \end{enumerate}
    Each of these operators is densely defined, and Hermitian if and only if
    $M$ is.
    Moreover, they all have the same closure, and therefore the same deficiency
    indices and the same self-adjoint extensions.
\end{lemma}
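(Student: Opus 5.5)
The plan is to compare every domain with the largest one, domain (iv), which I denote $\domain_k$. It is the domain of $(N+1)^{k/2}$, where $N=a^\dagger a$ is the number operator. I will show that each of (i)--(iv) is a core for the closure of $p_M$ on $\domain_k$. I will also show that the closure on $\domain_k$ does not depend on $k\geq 2s$.

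\emph{Well-definedness and density.} Domains (i) and (iii) are dense in $L^2(\rnum)$ by standard facts, and both lie in $\schwartz(\rnum)$. The Fock states are Hermite functions, so they are Schwartz functions. Moreover, $\schwartz(\rnum)$ consists exactly of the vectors $\sum c_n\phi_n$ with rapidly decaying $(c_n)$, so it is contained in every $\domain_k$. Hence (i), (ii), (iii) $\subset\domain_k$, and all the domains are dense. On $\domain_k$ the operator $(a^\dagger)^j a^l$ with $j,l\le s$ shifts the Fock index by $j-l$. It multiplies coefficients by factors of size $O(n^{(j+l)/2})\le O(n^{s})$, so $p_M\psi\in L^2$ as soon as $\sum n^{2s}|c_n|^2<\infty$; this is where $k\ge 2s$ enters. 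The same computation gives the estimate $\norm{p_M\psi}\le C\norm{(N+1)^{s}\psi}\le C\norm{(N+1)^{k/2}\psi}$. Hermiticity on $\domain_k$ follows from $\braket{\phi_m,p_M\phi_n}=\overline{\braket{\phi_n,p_{M}\phi_m}}$ when $M=M^\ast$, combined with a limiting argument using the bound above. Conversely, if $p_M$ is Hermitian on any of the domains, then testing on Fock states or Hermite functions recovers the matrix elements. For the polynomial to be Hermitian these must satisfy $M_{k,l}=\overline{M_{l,k}}$, which I would check by reading off the normal-ordered coefficients degree by degree, starting from the highest shift.

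\emph{Same closure.} Let $\psi\in\domain_k$. Its Fock truncations $\psi^{(N)}\in$ (i) converge to $\psi$ in the graph norm of $(N+1)^{k/2}$. By the estimate above they then also converge in the graph norm of $p_M$. Hence (i) is a core for $p_M|_{\domain_k}$, so the closures of (i), (ii) and $\domain_k$ all coincide, since they are sandwiched between (i) and $\domain_k$. The same sandwich shows that the closure does not depend on $k$.

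\emph{Main obstacle.} The remaining step is (iii), because $C_0^\infty$ contains no Fock states. I would show that $C_0^\infty(\rnum)$ is dense in $\schwartz(\rnum)$ with respect to the graph norm of $(N+1)^{s}$. This norm is equivalent to a finite Schwartz seminorm of the form $\sum_{a+b\le 2s}\norm{x^a\partial^b\psi}$. Density then follows by the standard cutoff $\chi(\cdot/R)\psi$ with $\chi\in C_0^\infty$ equal to $1$ near $0$: every term $x^a\partial^b[(1-\chi(x/R))\psi]$ tends to $0$ in $L^2$ as $R\to\infty$ because $\psi$ decays rapidly. Consequently the closure over (iii) contains the closure over (ii), and equality follows since $C_0^\infty\subset\schwartz$. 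Equal closures give equal adjoints, and therefore the same deficiency indices and the same self-adjoint extensions.
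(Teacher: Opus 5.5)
Your proof is correct and follows essentially the same route as the paper. Both arguments reduce the claim to two facts: Fock truncations converge in the graph norm via the bound $\lVert p_M\psi\rVert\lesssim\lVert(a^\dagger a+1)^{s}\psi\rVert$, and $C_0^\infty(\rnum)$ is graph-norm dense in $\schwartz(\rnum)$; you make the second fact explicit with a cutoff and a finite Schwartz seminorm, where the paper simply invokes continuity of $p_M$ in the Schwartz topology. The one slip is that $C_0^\infty(\rnum)$ contains no Hermite functions, so for domain (iii) you cannot read off Hermiticity by testing on them; instead, pass the identity $\langle f,p_M g\rangle=\langle p_M f,g\rangle$ to graph-norm limits using your own density result.
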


\begin{proof}
    Density and Hermiticity are immediate in each case, the latter from
    \Cref{eq:action-beta}.
    For the remaining claim, write $\domain_F$, $\domain_S$, $\domain_C$ and
    $\domain_k$ for the four domains above, in order, and let
    $\overline{X}^{\norm{}_1}$ denote the closure of a set $X$ with respect to
    the graph norm
    \begin{equation}
        \norm{\psi}_1^2 = \norm{p_M \psi}^2 + \norm{\psi}^2 \, .
    \end{equation}
    We must show that
    \begin{equation}
        \label{proofeq:domains-equal}
        \overline{\domain_F}^{\norm{}_1} = \overline{\domain_S}^{\norm{}_1}
        = \overline{\domain_C}^{\norm{}_1} = \overline{\domain_k}^{\norm{}_1}
    \end{equation}
    for all $k \geq 2s$.
    Since $\domain_F \subset \domain_S \subset \domain_k$ and
    $\domain_C \subset \domain_S$, and the closure of a set is the smallest
    closed set containing it, it suffices to prove the two inclusions
    \begin{equation}
        \label{proofeq:closure-incusion}
        \overline{\domain_F}^{\norm{}_1} \supset \domain_k \, , \quad
        \overline{\domain_C}^{\norm{}_1} \supset \domain_S \, .
    \end{equation}
    For the first one, let $\psi = \sum_{n = 0}^\infty c_n \phi_n \in \domain_k$
    and set $\psi_m = \sum_{n = 0}^m c_n \phi_n \in \domain_F$, so that
    \begin{equation}
        \label{proofeq:tail}
        \lim_{m \to \infty} \norm*{(a^\dagger a)^{k/2} (\psi-\psi_m)}^2
        = \lim_{m \to \infty} \sum_{n = m+1}^\infty n^k |c_n|^2 = 0 \, ,
    \end{equation}
    by the definition of $\domain_k$.
    As $p_M$ is a polynomial of degree $\deg p_M \leq 2s$, there exist
    $C_1,C_2>0$ such that
    \begin{equation}
        \norm{p_M \varphi}^2 \leq
        C_1 \norm*{(a^\dagger a)^{k/2} \varphi}^2+C_2 \norm{\varphi}^2
    \end{equation}
    for all $\varphi \in \domain_F$, and therefore
    \begin{equation}
        \norm{\psi-\psi_m}_1^2 \leq
        C_1 \norm*{(a^\dagger a)^{k/2} (\psi-\psi_m)}^2
        + (C_2+1) \norm{\psi-\psi_m}^2 \, .
    \end{equation}
    The difference $\psi - \psi_m$ is supported on the indices
    $n \geq m+1 \geq 1$, on which $n^k \geq 1$; hence
    \begin{equation}
        \norm{\psi-\psi_m}^2 = \sum_{n = m+1}^\infty|c_n|^2
        \leq \sum_{n = m+1}^\infty n^k |c_n|^2
        = \norm*{(a^\dagger a)^{k/2}(\psi-\psi_m)}^2 \, ,
    \end{equation}
    and both terms vanish as $m \to \infty$ by \cref{proofeq:tail}.
    Thus $\norm{\psi-\psi_m}_1 \to 0$, and the first inclusion in
    \Cref{proofeq:closure-incusion} holds.

    For the second one, $p_M$ maps $\schwartz(\rnum)$ continuously into
    $L^2(\rnum)$ with respect to the Schwartz topology, so the graph norm
    $\norm{\cdot}_1$ is continuous on $\schwartz(\rnum)$.
    Since $C_0^\infty(\rnum)$ is dense in $\schwartz(\rnum)$ in that topology,
    it is dense in it with respect to $\norm{\cdot}_1$ as well, which is the
    second inclusion.
\end{proof}
Hereafter, and without loss of generality, we therefore write $p_M$ for the operator
acting as in \cref{eq:def-pm} on the finite linear span of the Fock states
$(\phi_n)_{n \in \nnum}$---the choice in which the block Jacobi structure is
manifest, as we now show.

\subsection{Reduction to block Jacobi form}\label{sec:reduction}
We first recall some definitions
from \cite[Section 2]{fischer-selfadjointrealizationshigherorder-2026}.
For $x \in \mathbb{Z}$ and $y \in \nnum$, the Pochhammer symbol $(x,y)$ is given by
\begin{equation}
    (x,y) = \prod_{i = 1}^y (x+i-1) = x (x+1) \dots (x+y-1) \,,
\end{equation}
with the convention $(x,y) = 0$ for $x<0$.
Moreover, for $n,k,l \in \nnum$, we introduce the coefficients
\begin{equation}
    \label{eq:beta}
    \beta^{k,l}_n = \sqrt{(n-l+1,l)(n-l+1,k)} \, ,
\end{equation}
which obey the symmetry relation $\beta^{k,l}_{n-k+l} = \beta^{l,k}_n$.
In particular, we obtain
\begin{equation}
    \label{eq:action-beta}
    (a^\dagger)^k a^l \phi_n = \beta^{k,l}_n \phi_{n + k-l}
\end{equation}
for all $n \in \nnum$.
Here and in the following, we abuse notation by allowing Fock states $\phi_n$ with negative $n$ on the right-hand side, noting that $\beta^{k,l}_n = 0$ for all $n < l$.
Moreover, we assume that $p_M$ is not diagonal, i.e., not a function of the number operator $a^\dagger a$, as such operators are trivially essentially self-adjoint~\cite[Theorem 2.21]{teschl-mathematicalmethodsquantum-2009}.
These definitions allow us to make the following statement, which is similar to the one obtained in \cite{mirzoev-generalizedjacobimatrices-2021}.
\begin{proposition}
    \label{prop:decomposition}
    Let $M \in \Mat(s+1,\cnum)$ be a Hermitian matrix, let $p_M$ be the operator from \cref{eq:def-pm}, and set $d = \max\{|k-l|\, :\, M_{k,l} \neq 0\}$.
    Then $p_M$ is unitarily equivalent to the block Jacobi operator $J_M$ on $\ell^2(\nnum,\cnum^d)$, defined as in \cref{eq:def-block-jacobi} by the matrices
    \begin{alignat}{2}
        A_n &= \left((A_n)_{ij}\right)_{i,j=0,\dots,d-1}\, , & \quad (A_n)_{ij}  & =  \sum_{l=d-i+j}^{s} M_{l-d+i-j,l} \beta^{l,l-d+i-j}_{nd+i} \, , \\
        B_n &= \left((B_n)_{ij}\right)_{i,j=0,\dots,d-1}\, , & \quad (B_n)_{ij}  & =  \sum_{l=\max(j-i,0)}^{\min(s,s+j-i)} M_{l+i-j,l} \beta^{l+i-j,l}_{nd + j} \, .
    \end{alignat}
    In particular, the $B_n$ are Hermitian for all $n \in \nnum$, and the $A_n$ are lower triangular matrices.
\end{proposition}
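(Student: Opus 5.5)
The unitary equivalence will be the identification of $\ell^2(\nnum)$ (Fock basis) with $\ell^2(\nnum,\cnum^d)$ obtained by grouping consecutive Fock states into blocks of length $d$. Concretely, I would define $\mathcal{U}:\ell^2(\nnum)\to\ell^2(\nnum,\cnum^d)$ by $(\mathcal{U}\phi_m)= e_i$ in block $n$, where $m=nd+i$ with $0\le i\le d-1$, and $e_i$ is the $i$th standard basis vector of $\cnum^d$. This map is unitary. It sends the finite span of Fock states onto $\ell_0(\nnum,\cnum^d)$, so by \cref{lem:all-domains} it suffices to compare $\mathcal{U}p_M\mathcal{U}^\ast$ with $\Jop$ on finitely supported sequences.

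The first step uses \cref{eq:action-beta}: the matrix element $\langle\phi_{m'},p_M\phi_m\rangle$ equals $\sum M_{k,l}\beta^{k,l}_m$, summed over pairs with $k-l=m'-m$. It therefore vanishes whenever $|m'-m|>d$, by definition of $d$. Writing $m=nd+j$ and $m'=n'd+i$, the constraint $|m'-m|\le d$ forces $|n'-n|\le1$. Hence the grouped matrix is block tridiagonal, and $\Jop$ has the form \cref{eq:def-block-jacobi}.

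Next I would read off the blocks. The diagonal block $(B_n)_{ij}$ is the coefficient of $\phi_{nd+i}$ in $p_M\phi_{nd+j}$. That is $\sum_{k-l=i-j}M_{k,l}\beta^{k,l}_{nd+j}$, and substituting $k=l+i-j$ with $0\le l,\,l+i-j\le s$ gives exactly the stated range $\max(j-i,0)\le l\le\min(s,s+j-i)$. The superdiagonal block $(A_n)_{ij}$ is the coefficient of $\phi_{nd+i}$ in $p_M\phi_{(n+1)d+j}$. This requires a lowering shift $k-l=i-j-d$, so $k=l-d+i-j$, evaluated at $\beta^{k,l}_{nd+d+j}$.

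The main obstacle I expect is reconciling this evaluation point and index bookkeeping with the stated formula, which uses $\beta^{l,\,l-d+i-j}_{nd+i}$. I would resolve it with the symmetry $\beta^{k,l}_{n-k+l}=\beta^{l,k}_n$ applied at $n\mapsto nd+i$. Equivalently, one computes $(A_n)_{ij}$ as the conjugate of the subdiagonal entry $\langle\phi_{(n+1)d+j},p_M\phi_{nd+i}\rangle$ and uses Hermiticity $M_{k,l}=\overline{M_{l,k}}$ together with reality of $\beta$. Hermiticity of $M$ then gives $B_n^\ast=B_n$ and shows the subdiagonal block is $A_n^\ast$.

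Lower triangularity follows because, for $j>i$, one needs $|k-l|=d+j-i>d$, which is impossible, so $(A_n)_{ij}=0$. This is reflected in the empty sum range $l\ge d-i+j$ with $l-d+i-j\ge0$ when coefficients vanish. I would also note that the diagonal of $A_n$ carries the shift-$d$ coefficients, which are nonzero for large $n$ since $\beta>0$ there. This ensures invertibility eventually, but it is not needed for the equivalence itself.
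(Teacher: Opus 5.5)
Your proposal is correct and follows essentially the same route as the paper. Both use the block-grouping unitary $\phi_{nd+i}\leftrightarrow e_n^i$, block tridiagonality from $|m'-m|\le d$, and read off $B_n$ and $A_n$ by grouping terms of constant shift $k-l$, converting the evaluation point via $\beta^{k,l}_{r-k+l}=\beta^{l,k}_r$; lower triangularity then follows from $|k-l|=d+j-i>d$ for $j>i$.

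Your closing remark on invertibility is not needed here, and as stated it is imprecise. The diagonal entries $\sum_l M_{l-d,l}\beta^{l,l-d}_{nd+i}$ of $A_n$ could cancel, so positivity of $\beta$ alone does not make them nonzero; that requires the dominant-term estimate of \cref{lem:invertible}.
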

\begin{proof}
    By assumption, $p_M$ is not diagonal and thus $d \geq 1$ holds. 
    For $n \in \nnum$ and $i=0,\dots,d-1$, let $e_n^i$ be the sequence defined by
    \begin{equation}
        (e_n^i)_m = \delta_{nm} e^i\,,
    \end{equation}
    where $e^i$ is the $i$th basis vector of $\cnum^d$.
    Then the unitary operator $U: \ell^2(\nnum,\cnum^d) \to L^2(\rnum)$ defined by
    \begin{equation}
        \label{proofeq:def-u}
        U e_n^i = \phi_{nd+i}
    \end{equation}
    and extended by linearity, implements the unitary equivalence between $\ell^2(\nnum,\cnum^d)$ and $L^2(\rnum)$.
    In particular, it maps $\ell_0(\nnum,\cnum^d)$ onto $\mspan(\phi_r)_{r \in \nnum}$.
    In the following, we show that $J_M = U^\ast p_M U$.
       Using \cref{eq:action-beta}, we immediately get
    \begin{equation}
        p_M \phi_r = \sum_{k,l=0}^s \beta^{k,l}_r M_{k,l}\phi_{r+k-l}\, .
    \end{equation}
    By definition, $\max \{|k-l| \, : \, M_{k,l} \neq 0 \} = d$, so only the terms with $|k-l| \leq d$ contribute.
    Grouping by constant difference $\Delta = k-l$, we obtain
    \begin{equation}
        \label{proofeq:pm-expansion}
         p_M \phi_r = \sum_{\Delta = -d}^d \sum_{l=\max(-\Delta,0)}^{\min(s,s-\Delta)} M_{l+\Delta,l} \beta^{l+\Delta,l}_r \phi_{r + \Delta} \,.
    \end{equation}
    Our goal is to identify $p_M$ with the block Jacobi operator $J_M$ by means of the unitary transformation $U$ of \cref{proofeq:def-u}.
    To this end, we compute the scalar product
   $\braket{\phi_{m d + i},p_M \phi_{n d + j}}$
    for $m,n \in \nnum$ and $i,j \in \{0,\dots,d-1\}$.
    Substituting $r = nd + j$ into \cref{proofeq:pm-expansion}, we get
    \begin{equation}
         p_M \phi_{nd+j} = \sum_{\Delta = -d}^d \sum_{l=\max(-\Delta,0)}^{\min(s,s-\Delta)} M_{l+\Delta,l} \beta^{l+\Delta,l}_{nd+j} \phi_{nd+j + \Delta} \,.
    \end{equation}
    It is hence clear that $ \braket{\phi_{md + i},p_M \phi_{nd + j}} \neq 0$ implies $m\in \{n,n\pm 1\}$.
    In the following, we treat the cases $m=n$ and $m = n\pm 1$ separately.
    For $m=n$, we get non-zero contributions for $\Delta = i-j$, and hence
    \begin{equation}
        \label{proofeq:bmat}
        \braket{\phi_{n d + i},p_M \phi_{nd + j}} = \sum_{l=\max(j-i,0)}^{\min(s,s+j-i)} M_{l+i-j,l} \beta^{l+i-j,l}_{nd + j} \, .
    \end{equation}
    Similarly, for $m=n-1$ the summands with $\Delta = -d+i-j$ contribute and, using $\Delta \leq -1$, we get
    \begin{equation}
        \braket{\phi_{(n-1)d + i}, p_M \phi_{nd + j}} = \sum_{l=d-i+j}^{s} M_{l-d+i-j,l} \beta^{l-d+i-j,l}_{nd+j} \, ,
    \end{equation}
    and, using $\beta^{l+\Delta,l}_{nd+i - \Delta}=\beta^{l,l+\Delta}_{nd+i}$ (cf. \cite[Lemma 2.6]{fischer-selfadjointrealizationshigherorder-2026}),
    \begin{equation}
        \label{proofeq:amat}
        \braket{\phi_{nd + i}, p_M \phi_{(n+1)d + j}} = \sum_{l=d-i+j}^{s} M_{l-d+i-j,l} \beta^{l,l-d+i-j}_{nd+i}
    \end{equation}
    follows.
    In order to prove the equality $J_M = U^\ast p_M U$, we evaluate, using \cref{eq:def-block-jacobi,proofeq:def-u},
    \begin{equation}
        \braket{\phi_{md+i},U J_M U^\ast \phi_{nd+j}}
        = \begin{cases}
            (A_{n-1})_{ij} &  m=n-1 \\
            (B_{n})_{ij} & m=n \\
            (A_{n}^\ast)_{ij} & m = n+1 \\
            0 & \text{else}
        \end{cases}\, .
    \end{equation}
    Comparing this with \cref{proofeq:bmat,proofeq:amat} yields the claim.
    By construction, $B_n^\ast = B_n$ and $\braket{\phi_{(n+1)d+i},p_M\phi_{nd+j}} = (A^\ast_n)_{ij}$ for all $n \in \nnum$.

    Finally, for $j > i$ we get $|l-d+i-j-l| > d$ for all $l \in \nnum$, and hence, by the definition of $d$, $M_{l-d+i-j,l} = 0$ for all $d-i+j \leq l \leq s$.
    It follows that $(A_n)_{ij} = 0$, that is, that $A_n$ is lower triangular.
\end{proof}
The careful reader will have noticed that nothing guarantees that the blocks $A_n$ are invertible.
The following lemma shows that this is, in fact, not a problem.
\begin{lemma}
    \label{lem:invertible}
    Let $M \in \Mat(s+1,\cnum)$ be a Hermitian matrix, $p_M$ be the operator from \cref{eq:def-pm}, and let $A_n$ be given as in \cref{prop:decomposition}.
    Then there exists $N \in \nnum$ such that $A_n$ is invertible for all $n \geq N$.
\end{lemma}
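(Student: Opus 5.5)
Since \cref{prop:decomposition} shows that every $A_n$ is lower triangular, $\det A_n=\prod_{i=0}^{d-1}(A_n)_{ii}$. The plan is therefore to show that each diagonal entry $(A_n)_{ii}$ is nonzero for all sufficiently large $n$, and then to take $N$ as the maximum of the finitely many thresholds obtained for $i=0,\dots,d-1$.

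Setting $j=i$ in the formula of \cref{prop:decomposition} gives
\begin{equation*}
(A_n)_{ii}=\sum_{l=d}^{s} M_{l-d,l}\,\beta^{l,l-d}_{nd+i}\, .
\end{equation*}
The first step is to check that this sum has at least one nonzero coefficient. By definition of $d$, some $M_{k,l}\neq0$ has $|k-l|=d$. If $k-l=-d$, then $M_{l-d,l}\neq0$ directly. If $k-l=d$, then by Hermiticity $M_{l,k}=\overline{M_{k,l}}\neq0$, and $l=k-d$, which is again of the form $M_{l'-d,l'}$ with $l'=k$. Hence the index set $\Lambda=\{l\in\{d,\dots,s\}: M_{l-d,l}\neq0\}$ is nonempty, and we set $l^\ast=\max\Lambda$.

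The second step is the asymptotics of the coefficients. By \cref{eq:beta},
\begin{equation*}
\beta^{l,l-d}_r=\sqrt{(r-l+d+1,\,l-d)\,(r-l+d+1,\,l)}\, .
\end{equation*}
For $r\ge l-d$ this is a strictly positive number. It is also the square root of a polynomial in $r$ of degree $2l-d$ with leading coefficient $1$, so $\beta^{l,l-d}_r\sim r^{\,l-d/2}$ as $r\to\infty$. Substituting $r=nd+i$, the terms with $l\in\Lambda$ grow at pairwise distinct rates $(nd)^{l-d/2}$. Dividing $(A_n)_{ii}$ by $\beta^{l^\ast,l^\ast-d}_{nd+i}$ therefore gives
\begin{equation*}
\frac{(A_n)_{ii}}{\beta^{l^\ast,l^\ast-d}_{nd+i}}=M_{l^\ast-d,l^\ast}+O(n^{-1})\longrightarrow M_{l^\ast-d,l^\ast}\neq0\, .
\end{equation*}
Consequently $(A_n)_{ii}\neq0$ for all $n$ beyond some $N_i$. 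Taking $N=\max_i N_i$ gives $\det A_n\neq0$ for all $n\geq N$.

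The only point needing care is the nonemptiness of $\Lambda$, which is where Hermiticity of $M$ enters. Without it, the maximal shift $d$ could be attained only by terms with $k-l=+d$, which live in $A_n^\ast$ rather than in $A_n$. Another possible concern is cancellation among the terms of $(A_n)_{ii}$. This cannot occur asymptotically, because the exponents $l-d/2$ are distinct for distinct $l$, so the term with $l=l^\ast$ dominates. The remaining estimates, namely that each Pochhammer product is a monic polynomial in $r$ and hence has the stated growth, are routine.
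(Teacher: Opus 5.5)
Your proof is correct and follows the paper's argument: lower-triangularity reduces invertibility to the diagonal entries $\sum_{l=d}^{s} M_{l-d,l}\beta^{l,l-d}_{nd+i}$, and the term with the largest admissible $l$ dominates because $\beta^{l,l-d}_r$ grows like $r^{l-d/2}$. The paper gets this growth from the cited two-sided bound $\tfrac12 r^{(k+l)/2}\le\beta^{k,l}_r\le 2r^{(k+l)/2}$ rather than your monic-polynomial asymptotics, and your explicit use of Hermiticity to show that some $M_{l-d,l}\neq0$ makes precise a step the paper only attributes to the definition of $d$.
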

\begin{proof}
    As in \cref{prop:decomposition}, let $d = \max\{|k-l|\, :\, M_{k,l} \neq 0\}$ be the maximal shift in photon number, which is precisely the size of the blocks $A_n$.
    By \cref{prop:decomposition}, the $A_n$ are lower triangular matrices, and therefore their determinant is the product of their diagonal entries:
    \begin{equation}
        \label{proofeq:det}
        \det A_n = \prod_{i = 0}^{d-1} \sum_{l = d}^s M_{l-d,l} \beta^{l,l-d}_{nd + i} \, .
    \end{equation}
    The determinant vanishes if and only if one of its factors vanishes, that is, if there exists $i=0,\dots,d-1$ such that
    \begin{equation}
        \sum_{l = d}^s M_{l-d,l} \beta^{l,l-d}_{nd + i} = 0 \, .
    \end{equation}
    In the following, let $l_{\mathrm{max}}$ be the largest $l\in\{d,\dots,s\}$ such that $M_{l-d,l}\neq0$, which exists by definition of $d$.
    By \cite[Lemma 2.10]{fischer-selfadjointrealizationshigherorder-2026}, for all $k,l \in \nnum$ we get
    \begin{equation}
        \frac{1}{2} r^{(k+l)/2} \leq \beta^{k,l}_r \leq 2 r^{(k+l)/2} \, ,
    \end{equation}
    for $r$ large enough, and it follows that there exists $N \in \nnum$ such that
    \begin{align}
        \left|\sum_{l = d}^s M_{l-d,l} \beta^{l,l-d}_{nd + i}\right| & \geq \frac{1}{2} |M_{l_{\mathrm{max}}-d,l_{\mathrm{max}}}| (nd+i)^{(2l_{\mathrm{max}}-d)/2}-\sum_{l=d}^{l_{\mathrm{max}}-1} 2 |M_{l-d,l}| (nd+i)^{(2l-d)/2} \\
        & \geq \frac{1}{4} |M_{l_{\mathrm{max}}-d,l_{\mathrm{max}}}| (nd+i)^{(2l_{\mathrm{max}}-d)/2}
    \end{align}
    holds for all $n \geq N$.
    In particular, all factors in \cref{proofeq:det} are non-zero, whence $\det A_n \neq 0$ for all $n \geq N$.
\end{proof}
We thus obtain the following decomposition of $p_M$ into a block Jacobi operator with invertible off-diagonal terms and a finite-rank perturbation:
\begin{lemma}
    \label{lem:full-decomposition}
    Let $M \in \Mat(s+1,\cnum)$ be a Hermitian matrix, $p_M$ be the operator from \cref{eq:def-pm}, and let $d$, $A_n$ and $B_n$ be defined as in \cref{prop:decomposition}.
    Then, for all $N \in \nnum$, there exists a Hermitian finite-rank operator $F$ such that $p_M$ is unitarily equivalent to the operator
    \begin{equation}
        0_N \oplus J_M(N) + F \, ,
    \end{equation}
    on $\ell^2(\nnum,\cnum^d)$, where $J_M(N)$ is the block Jacobi operator given by the sequences $(A_{n+N})_{n \in \nnum}$ and $(B_{n+N})_{n \in \nnum}$ as in \cref{eq:def-block-jacobi}, and $0_N = \bigoplus_{n = 0}^{N-1} 0_{\cnum^d}$.
    In particular, $p_M$ has the same deficiency indices as $J_M(N)$.
\end{lemma}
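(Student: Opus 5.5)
By Proposition~\ref{prop:decomposition}, $p_M$ on the span of Fock states is unitarily equivalent, via $U$, to the block Jacobi operator $J_M$ on $\ell_0(\nnum,\cnum^d)$. I therefore work entirely with $J_M$ and split it relative to the orthogonal decomposition
\begin{equation}
    \ell^2(\nnum,\cnum^d) = \cnum^{dN} \oplus \ell^2(\nnum,\cnum^d),
\end{equation}
where the first summand consists of the first $N$ blocks and the second is the tail, reindexed by $n\mapsto n-N$.

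On $\ell_0$, the operator $J_M$ is the sum of three pieces. The first is its compression to the first $N$ blocks, a finite Hermitian matrix $J_{<N}$. The second is its compression to the tail, which is exactly $J_M(N)$, since it is given by the sequences $(A_{n+N})$ and $(B_{n+N})$. The third is the coupling term
\begin{equation}
    C = \pi_{N-1}^\ast A_{N-1}\pi_N + \pi_N^\ast A_{N-1}^\ast \pi_{N-1},
\end{equation}
which has rank at most $2d$. I set $F = J_{<N}\oplus 0 + C$. This $F$ is Hermitian, since $J_{<N}$ is a Hermitian matrix and $C$ is symmetric by construction, and it is bounded of finite rank at most $dN+2d$. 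The identity $J_M = 0_N\oplus J_M(N) + F$ then holds on $\ell_0$ by block-by-block comparison with \eqref{eq:def-block-jacobi}. Both sides have domain $\ell_0$ once $0_N\oplus J_M(N)$ is taken on $\cnum^{dN}\oplus\ell_0$, which equals $\ell_0$. Conjugating by $U$ gives the unitary equivalence with $p_M$. Note that invertibility of the $A_n$ plays no role in this part, so the decomposition holds for every $N$.

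For the deficiency indices I use the standard fact that a bounded symmetric perturbation leaves them unchanged. If $T$ is symmetric and $F$ is bounded and symmetric, then $\overline{T+F} = \overline{T}+F$, because the graph norms of $T$ and $T+F$ are equivalent. The indices are also constant on each half-plane: $\dim\Ran(\overline{T}+F-z)^\perp$ is constant for $\Imag z>0$ and for $\Imag z<0$ separately. Together with the Kato--Rellich/Weyl-type invariance of deficiency indices under bounded symmetric perturbations (Reed--Simon; Schmüdgen, Prop.~3.7 style), this gives that $J_M$ and $0_N\oplus J_M(N)$ have the same indices. Alternatively, one can see this directly: $(T+F)^\ast = T^\ast+F$, and the defect spaces of the two operators differ by at most the finite-dimensional range of $F$, which can be made rigorous with the stability of indices. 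Finally, $0_N$ acts on a finite-dimensional space, so it is self-adjoint with zero indices, and deficiency indices are additive under orthogonal direct sums. Hence $n_\pm(p_M)=n_\pm(J_M(N))$.

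The main point requiring care is this invariance step. It must be invoked for bounded symmetric perturbations of unbounded symmetric operators, not merely of self-adjoint ones. The cleanest route is to show that $\mathcal{D}(T^\ast) = \mathcal{D}(\overline{T}) \dotplus \dspace_+ \dotplus \dspace_-$ and that $\dim\mathcal{D}(T^\ast)/\mathcal{D}(\overline{T})$ together with the half-plane constancy pins the indices down. In the finite-rank setting, though, the simplest argument is via the homotopy $T+tF$, $t\in[0,1]$: along this path the deficiency index on each half-plane is locally constant by the usual stability of $\dim\Ran(\,\cdot-z)^\perp$ under small bounded perturbations, so it is constant on the whole interval.
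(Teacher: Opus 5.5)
Your proof is correct and follows the paper's: your $F = J_{<N}\oplus 0 + C$ is exactly the paper's $F = J_M - 0_N\oplus J_M(N)$. The paper simply cites invariance of deficiency indices under finite-rank perturbations, and of the justifications you offer, the homotopy argument is the one that closes the step (each $T+tF-\iu$ is bounded below by $1$, so the indices are locally constant in $t$), whereas the remark via $\dim\mathcal{D}(T^\ast)/\mathcal{D}(\overline{T})$ only determines $n_++n_-$.
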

\begin{proof}
    For $N \in \nnum$, $0_N \oplus J_M(N)$ acts on $(u_n)_{n \in \nnum} \in \ell(\nnum,\cnum^d)$ as
    \begin{equation}
        ((0_N \oplus J_M(N)) u)_n = \begin{cases}
            0 & n \leq N-1 \\
            A_N u_{N+1} + B_N u_N & n = N \\
            A_n u_{n+1} + B_n u_n + A_{n-1}^\ast u_{n-1} & n \geq N+1
        \end{cases} \, .
    \end{equation}
    Defining $ F=J_M- 0_N \oplus J_M(N)$, we have
    \begin{equation}
        (Fu)_n = \begin{cases}
            A_0 u_1 + B_0 u_0 & n = 0 \\
            A_n u_{n+1} + B_n u_n + A_{n-1}^\ast u_{n-1} & 1 \leq n \leq N-1 \\
            A_{N-1}^\ast u_{N-1} & n = N \\
            0 & n \geq N+1
        \end{cases} \, .
    \end{equation}
    In particular, $F$ is a finite-rank operator.
    Since finite-rank perturbations do not change the deficiency indices, the claim follows.
\end{proof}

\subsection{Deficiency indices \texorpdfstring{of $p_M$}{}}\label{sec:main-result}
By \cref{lem:full-decomposition}, up to a finite-rank perturbation, every non-diagonal polynomial $p_M$ can be written as a block Jacobi operator with invertible off-diagonal blocks, and its deficiency indices can be obtained from those of $J_M(N)$ via \cref{prop:radius}.
We gather this in the following corollary, the mathematical backbone of the algorithm.

\begin{corollary}
    \label{cor:main}
    Let $M \in \Mat(s+1,\cnum)$ be Hermitian, let $p_M$ be the operator defined
    in \cref{eq:def-pm},
    and set $d = \max\{|k-l| \, : \, M_{k,l} \neq 0\}$.
    Let $A_n, B_n \in \Mat(d,\cnum)$ be the blocks of \cref{prop:decomposition},
    and let $N \in \nnum$ be such that $A_n$ is invertible for all $n \geq N$,
    as provided by \cref{lem:invertible}.
    Denote by $(P_k(z))_{k \in \nnum}$ the matrix polynomials generated by
    \begin{equation}
        P_0(z) = \id \, , \quad P_1(z) = A_N^{-1}(z-B_N) \, ,
    \end{equation}
    and by the recurrence equation \eqref{eq:generalized-eigenvalue} with the shifted
    blocks $(A_{n+N})_{n \in \nnum}$, $(B_{n+N})_{n \in \nnum}$, and set
    \begin{equation}
        \label{eq:radius-explicit}
        R^\pm = \lim_{n \to \infty}
        \left(2 \sum_{k = 0}^n P_k(\pm \iu)^\ast P_k(\pm \iu)\right)^{-1} .
    \end{equation}
    Then the limits in \cref{eq:radius-explicit} exist, and the deficiency
    indices of $p_M$ are $ n_\pm = \dim \Ran R^\pm$.
    In particular, $0 \leq n_\pm \leq d$, $n_\pm$ depends on $M$ alone, and $n_+ = n_-$ whenever $M$ is real.
\end{corollary}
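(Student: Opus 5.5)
The corollary is essentially an assembly of the earlier results. We will chain \cref{prop:decomposition}, \cref{lem:invertible}, \cref{lem:full-decomposition} and \cref{prop:radius}, and then treat the two supplementary claims (independence of $M$ alone, and $n_+=n_-$ for real $M$) separately.

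\textbf{Step 1: reduce to a Jacobi operator with invertible blocks.} By \cref{lem:all-domains}, the deficiency indices of $p_M$ do not depend on the choice among the standard domains, so we work on the span of Fock states. By \cref{prop:decomposition}, $p_M$ is unitarily equivalent to $J_M$ on $\ell^2(\nnum,\cnum^d)$. Choose $N$ as in \cref{lem:invertible}, so that all $A_{n+N}$, $n\geq 0$, are invertible. Then \cref{lem:full-decomposition} shows that $p_M$ and $J_M(N)$ have the same deficiency indices. Two facts are used here: finite-rank Hermitian perturbations leave deficiency indices unchanged, and the direct summand $0_N$ is bounded and self-adjoint, so it contributes nothing. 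In particular, the deficiency indices of $p_M$ coincide with those of $J_M(N)$.

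\textbf{Step 2: apply the Weyl disk theory.} $J_M(N)$ is a block Jacobi operator of the form \eqref{eq:def-block-jacobi} with blocks $A_{n+N}, B_{n+N}$, Hermitian $B$'s and invertible $A$'s. Its polynomials of the first kind are exactly the $P_k(z)$ in the statement, since the initial conditions \eqref{eq:poly-p} become $P_0=\id$, $P_1=A_N^{-1}(z-B_N)$. For $z=\pm\iu$ we have $2|\Imag z|=2$, so the matrices in \eqref{eq:radius-explicit} are precisely the radius matrices $R_n^{\pm}$ of \cref{lem:radius}. \Cref{prop:radius} then gives that the limits exist and that $n_\pm=\dim\Ran R^\pm$. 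The existence of the limits can also be seen directly: by \cref{lem:radius} the sequence is monotone decreasing and bounded below by $0$ in the Loewner order, hence strongly convergent. The bound $0\le n_\pm\le d$ is immediate, because $R^\pm\in\Mat(d,\cnum)$.

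\textbf{Step 3: dependence on $M$ alone, and real $M$.} The first claim, and the fact that the answer does not depend on the choice of $N$, follow because $n_\pm$ equal the deficiency indices of $p_M$ itself, which is defined from $M$ (and its closure is domain-independent by \cref{lem:all-domains}). For real $M$, all $\beta^{k,l}_n$ are real, so the blocks $A_n,B_n$ are real. Complex conjugation $C$ on $\ell^2(\nnum,\cnum^d)$ is then an antiunitary map that commutes with $J_M$ and maps $\ell_0$ onto itself. It therefore maps $\Ker(J_M^\ast-\iu)$ bijectively and antilinearly onto $\Ker(J_M^\ast+\iu)$, which gives $n_+=n_-$. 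Equivalently, $P_k(-\iu)=\overline{P_k(\iu)}$, so $R^-=\overline{R^+}$, which has the same rank.

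The only delicate point is the invariance of deficiency indices under the finite-rank perturbation $F$ and the direct sum with $0_N$. I would handle it by noting that $F$ is bounded, so the domains of the closures coincide. Moreover, $J+F\pm\iu$ and $J\pm\iu$ differ by a finite-rank operator, so their ranges have the same codimension by a standard Fredholm index argument. This is already asserted in \cref{lem:full-decomposition}, so here it can simply be cited.
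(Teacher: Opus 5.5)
Your proposal is correct and follows essentially the same route as the paper. Both arguments use domain independence via \cref{lem:all-domains}, then unitary equivalence to $J_M$, then the finite-rank reduction to $J_M(N)$ through \cref{lem:invertible,lem:full-decomposition}, then \cref{prop:radius}, and handle real $M$ by complex conjugation in the number basis. Your additional remarks (existence of the limits by Loewner monotonicity, the Fredholm justification of finite-rank invariance, and $R^-=\overline{R^+}$ for real $M$) are valid supplements that the paper leaves implicit or cites.
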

\begin{proof}
    By \cref{lem:all-domains}, the deficiency indices do not depend on which of
    the initial domains is chosen.
    By \cref{prop:decomposition}, $p_M$ is unitarily equivalent to the block
    Jacobi operator $J_M$ on $\ell^2(\nnum,\cnum^d)$, and by
    \Cref{lem:invertible,lem:full-decomposition} it is unitarily equivalent to
    $0_N \oplus J_M(N)$ up to a symmetric finite-rank perturbation, whence
    $n_\pm(p_M) = n_\pm(J_M(N))$.
    Since $J_M(N)$ has invertible off-diagonal blocks, \cref{prop:radius}
    applies and gives $n_\pm(J_M(N)) = \dim \Ran R^\pm$.

    Finally, if $M$ is real, $p_M$ commutes with complex conjugation in the number basis, and $n_+ = n_-$ follows by~\cite[Theorem X.3]{reed-mmmp2-fourier-1975}.
\end{proof}
This is the statement the algorithm implements: every quantity entering
\Cref{eq:radius-explicit} is obtained from $M$ by finitely many algebraic
operations, and the only non-algebraic ingredient is the limit $n \to \infty$.

\section{The algorithm and its implementation}\label{sec:algorithm}
In this appendix, we present the algorithm in detail.
\Cref{sec:algorithm_structure} states the procedure, which implements \cref{cor:main}, step by step;
\Cref{sec:convergence,sec:rescaling,sec:raabe,sec:depth,sec:carleman-ops} then justify its individual steps, grouped by the kind of numerical difficulty each addresses rather than by order of execution;
\Cref{sec:implementation} comments on the Mathematica implementation~\cite{notebook}.

\subsection{Statement of the algorithm}\label{sec:algorithm_structure}

We collect here the steps of the procedure, in the order in which they are
executed; their individual ingredients are justified in the remainder of this
appendix.
The input is the polynomial $p_M$ of \cref{eq:def-pm}; polynomials in $x$ and
$p$ are brought into this form beforehand.
The output is the pair of deficiency indices $(n_+,n_-)$, or the statement that
no result could be reached.
Besides $M$, the procedure depends on the threshold parameters
$0<\epsilon_0<\epsilon_1$ and $r>1$, on the window length $N_c \in \nnum$, and on a
maximum number of iterations.

\begin{enumerate}[1.]
    \item \emph{Reduction to a block Jacobi operator.}
    \begin{enumerate}[(a)]
        \item Compute the largest excitation shift $d = \max\{|k-l| \, : \, M_{k,l} \neq 0\}$, which fixes the block size. If $d=0$, $p_M$ is diagonal and hence essentially self-adjoint; return $(0,0)$ and stop.
        \item Assemble the blocks $A_n, B_n \in \Mat(d,\cnum)$ from $M$ as in \cref{prop:decomposition}, so that $p_M$ is unitarily equivalent to the block Jacobi operator $J_M$ on $\ell^2(\nnum,\cnum^d)$.
        \item Test the $A_n$ for invertibility. If some fail to be invertible, determine $N$ beyond which all of them are (\cref{lem:invertible}) and pass to the shifted operator $J_M(N)$; by \cref{lem:full-decomposition} the discarded part is a finite-rank perturbation and leaves the deficiency indices unchanged.
    \end{enumerate}

    \item\label{step:shortcut} \emph{Shortcut check.}
    Test whether the operator is Gaussian, i.e. a polynomial of order $2$ or lower.
    If it is, the Carleman condition \cref{prop:carleman} holds, and the operator is essentially self-adjoint; return $(0,0)$ and stop.
    Besides saving the iteration entirely, this step sets aside the operators for which the criterion of \cref{step:decision} is known to fail; see \cref{sec:carleman-ops}.

    \item\label{step:rescaling} \emph{Rescaling.}
    Replace $p_M$ by $\lambda p_M$, with $\lambda$ fixed by normalising the coefficient of the largest excitation shift to unity.
    Since the deficiency indices are constant on each open half-plane, this leaves $n_\pm$ unchanged; it prevents the spectral parameter $z = \pm\iu$ from dominating the recurrence at small $n$ and depressing the radii artificially, as discussed in \cref{sec:rescaling}.

    \item\label{step:depth} \emph{Choice of depth.}
    Choose the sampling size $N$ large enough that the blocks $A_n,B_n$ with $n \leq N$ have reached their asymptotic regime, so that subdominant low-order contributions are not mistaken for the true asymptotics; see \cref{sec:depth}.

    \item\label{step:iteration} \emph{Iteration.}
    If $M$ is purely real, run for $z = + \iu$, if not, run for $z = + \iu$ and $z = - \iu$ separately:
    \begin{enumerate}[(a)]
        \item generate the matrix polynomials $P_k(z)$ from the initial conditions \cref{eq:poly-p} and the recurrence \cref{eq:generalized-eigenvalue};
        \item accumulate the radius matrices $R_n^\pm$ of \cref{lem:radius} up to $n = N$ and extract their eigenvalues, which by the same lemma are positive and decreasing, and hence organise into $d$ monotone bands $\rho^1_n,\rho^2_n,\dots,\rho^d_n$;
        \item monitor the numerical health of the iteration---ill-conditioning of the accumulated matrices, non-negligible imaginary parts of the $\rho^j_n$, loss of monotonicity---and, whenever one of these occurs, repeat the iteration at higher working precision, as described in \cref{sec:implementation}.
    \end{enumerate}

    \item\label{step:decision} \emph{Decision, band by band.}
    For each band, evaluate the Raabe sequence $L_n^j = n\left(1-\rho^j_{n+1}/\rho^j_n\right)$ and apply the following criteria, whose derivation and failure modes are discussed in \cref{sec:convergence,sec:raabe}:
    \begin{enumerate}[(a)]
        \item if $L_n^j > r$ for all large enough $n$, the band converges to $0$ and does not contribute;
        \item if $L_n^j < r^{-1}$ and $\rho^j_n > \epsilon_1$ for all large enough $n$, the band converges to a non-zero limit and contributes $1$;
        \item otherwise ($L_n^j<r^{-1}$ with $\rho^j_n<\epsilon_1$, or $L_n^j$ oscillating around $1$) the band is undecided: increase $N$ and return to \cref{step:iteration}, and report no result once the maximum number of iterations is reached.
    \end{enumerate}

    \item \emph{Output.}
    The deficiency indices $n_+$ and $n_-$ are the numbers of surviving bands for $z = +\iu$ and $z = -\iu$ respectively, if $M$ is purely real, $n_+=n_-$ holds.
    The result follows by \cref{prop:radius}: a unique evolution if $n_+=n_-=0$, a family of inequivalent ones if $n_+=n_-\neq0$, and none at all if $n_+\neq n_-$.
\end{enumerate}

\subsection{Deciding whether a band vanishes}\label{sec:convergence}
\begin{figure}[t]
    \centering
    \includegraphics[width=0.9\linewidth]{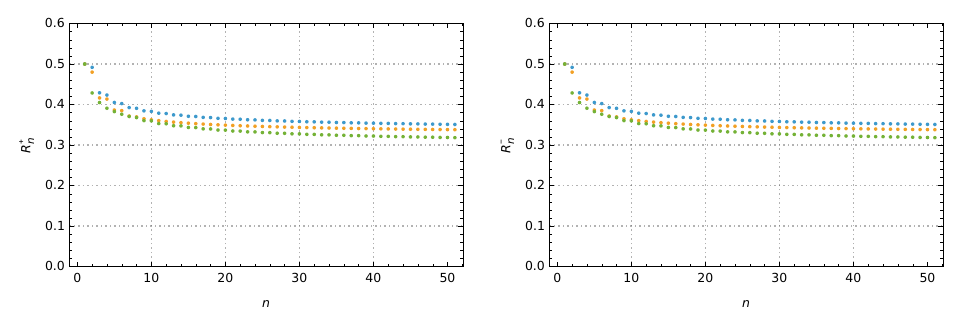}
    \caption{Eigenvalues of the radius matrices $R_n^\pm$ associated with the operator $(a^\dagger)^3 + a^3$, against $n$.}
    \label{fig:radia-sq3}
\end{figure}
This subsection and \cref{sec:raabe} justify \cref{step:decision} of \cref{sec:algorithm_structure}.
The algorithm determines the deficiency indices of $p_M$ by evaluating the eigenvalues of the radius matrices $R_n^\pm$ defined in \cref{lem:radius}.
For simplicity, we focus on a single band $\rho^j_n$, $j\in 1\dots d$,  of these eigenvalues.
By \cref{lem:radius}, the $R_n^\pm$ are strictly positive and monotonically decreasing, and the same therefore holds for the sequence $\rho^j_n$.
It follows that the limit $\rho^j = \lim_{n \to \infty} \rho^j_n$ always exists and is either $0$ or strictly positive.
By \cref{prop:radius}, the deficiency indices are given by the number of non-zero eigenvalues of the limiting radius matrices $R^\pm$.
The algorithm thus has to decide numerically whether $\rho^j>0$ for a given band, and to count the bands for which it does.
The most direct approach is to inspect the convergence visually.
As an example, we consider the trisqueezing operator $A_3 = (a^\dagger)^3 + a^3$, which is known to have deficiency indices $n_\pm = 3$ \cite{fischer-selfadjointrealizationshigherorder-2026}.
The eigenvalues of the associated radius operators $R_n^\pm$ are shown in \cref{fig:radia-sq3}.
For both $R_n^+$ and $R_n^-$, the three bands visibly converge to non-zero values, and the method reproduces $n_\pm = 3$.

We now make this criterion quantitative, so that the check can be automated rather than left to the eye.
The simplest quantitative criterion is to introduce two threshold parameters $0<\epsilon_0<\epsilon_1$ and a threshold window $N_c \in \nnum$.
We start by calculating an eigenvalue band $\rho^j_n$ for all $n \leq N$ with some $N \geq N_c$.
If the last $N_c$ values $\rho^j_n$ all lie within $\epsilon_0$ of one another, we assume that the sequence has converged sufficiently.
If they are all smaller than $\epsilon_0$, we conclude that the limit satisfies $\rho^j = 0$; if they are all larger than $\epsilon_1$, we conclude that $\rho^j> 0$.
In the intermediate case, we increase $N$.
The shortcoming of this approach is that no choice of $\epsilon_0$, however small, reliably separates $0<\rho^j<\epsilon_0$ from $\rho^j = 0$, which may lead to false claims of essential self-adjointness.
The four subsections that follow address this: two identify regimes in which the naive test fails and show how to avoid them, one introduces a sharper criterion, and one isolates a class of operators for which no numerical test is needed.

\subsection{Rescaling the operator}\label{sec:rescaling}
\begin{figure}[t]
    \centering
    \includegraphics[width=0.9\linewidth]{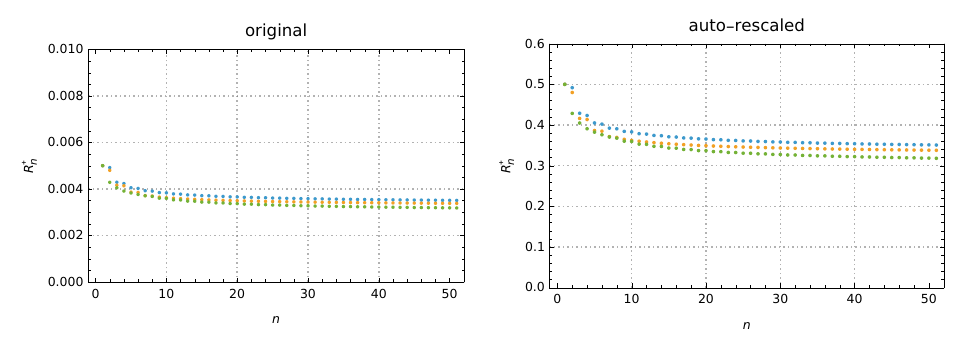}
    \caption{Eigenvalues of the radius operators $R_n^+$ associated with the operator $A_{3,\eta} =  \eta((a^\dagger)^3 + a^3)$ for $\eta = 10^{-2}$ (left) and for the automatically rescaled version ($\eta = 1$, right).}
    \label{fig:radia-sq3-prefactor}
\end{figure}
This subsection justifies \cref{step:rescaling} of \cref{sec:algorithm_structure}.
Consider a rescaled version of the operator just discussed: for $\eta > 0$, let $A_{3,\eta} = \eta((a^\dagger)^3 + a^3)$.
Its deficiency indices are $n_\pm = 3$ for every $\eta>0$.
\Cref{fig:radia-sq3-prefactor} shows the eigenvalues of the associated radius matrices $R_n^+$ for $\eta = 10^{-2}$:
the radii $\rho^j_n$ are clearly much smaller than for $\eta = 1$.
The reason is that, in \cref{eq:generalized-eigenvalue}, every quantity except $z = \iu$ is scaled by $\eta$.
For small $n$ the recurrence is therefore dominated by the spectral parameter, and the $R_n^+$ decay faster at the outset; only asymptotically do the off-diagonal blocks $A_n$ take over, after which all three radii converge to finite values.

The small radii that result are harder to tell from vanishing ones, which is precisely the ambiguity the naive test cannot resolve.
The remedy is to normalise the operator before the iteration begins: we replace $p_M$ by $\lambda p_M$, with $\lambda$ chosen so that the coefficient of the largest photon-number shift equals $1$.
For $A_{3,\eta}$ this gives $\lambda = 1/\eta$ and returns the operator to $A_{3,1}$.
The strategy has proved reliable across the examples we have tested.

\subsection{The Raabe criterion}\label{sec:raabe}
\begin{figure}[t]
    \centering
    \includegraphics[width=0.9\linewidth]{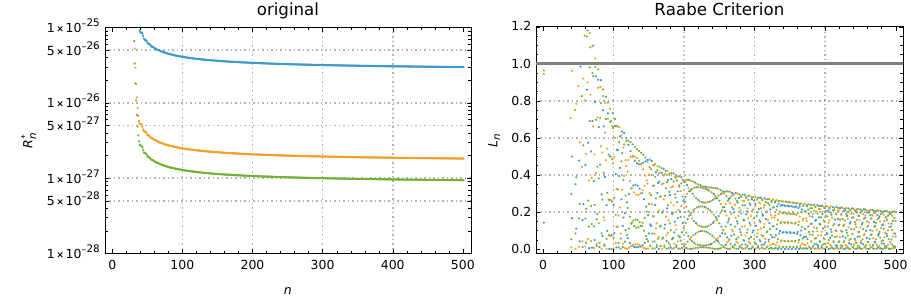}
    \caption{Left: eigenvalues $\rho^j_n$ of the radius operators $R_n^+$ associated with the operator $(a^\dagger)^3 + a^3 + 20 a^\dagger a$. Right: the transformed sequence $L_n^j =  n\left(1-\rho^j_{n+1}/\rho^j_n\right)$.
    While the plot alone does not reveal that the $\rho^j_n$ fail to converge to $0$, the sufficient criterion $L_n^j > 1 + \epsilon_0$ for convergence to $0$ is not met.}
    \label{fig:radia-sq3-bigkerr}
\end{figure}
Together with \cref{sec:convergence}, this subsection justifies \cref{step:decision} of \cref{sec:algorithm_structure}.
In the previous example, we were able to avoid having to deal with small but non-zero limiting radii $\rho^j>0$ by rescaling the operator $A_{3,\eta}$.
The following example illustrates that this is not always possible.
We consider the operator
\begin{equation}
    A_{3,1,\lambda} = (a^\dagger)^3 + a^3 + \lambda a^\dagger a \, ,
\end{equation}
which, by \cite{fischer-selfadjointrealizationshigherorder-2026}, has deficiency indices $n_\pm = 3$ for all $\lambda \in \rnum$, so that all three radius bands must converge to non-zero limiting values.
\Cref{fig:radia-sq3-bigkerr} shows that for $\lambda = 20$ the radii $\rho^j_n$ become so small as to be hard to distinguish from vanishing ones.
We therefore need a more reliable way of deciding whether $\lim_{n \to \infty} \rho^j_n = 0$.
A number of methods for enhancing or accelerating the convergence of sequences can be found in \cite{weniger-nonlinearsequencetransformations-1989,bojanic-unifiedtheoryregularly-1973}.
In our case, a different approach proved useful.
We consider the series $\sum_{n= 0}^\infty \rho^j_n$.
If the series converges, it follows directly that $\rho^j = \lim_{n \to \infty} \rho^j_n = 0$.
A powerful criterion for checking the convergence of a series is the Raabe criterion \cite[p.~285]{knopp-theoryapplicationinfinite-1951}:
if the quantity
\begin{equation}
   L_n^j = n\left(1-\frac{\rho^j_{n+1}}{\rho^j_n}\right) 
\end{equation}
is eventually larger than $1+\epsilon_0$ for some $\epsilon_0>0$, the series converges; if it is eventually smaller than $1$, the series diverges.
The former case thus yields a sufficient criterion for $\lim_{n \to \infty} \rho^j_n = 0$, while the latter is a necessary condition for $\lim_{n \to \infty} \rho^j_n > 0$.
To increase numerical stability, we increase an additional parameter $r > 1$, which, together with its inverse $r^{-1}$, is used to decide whether $L_n^j > 1$ or $L_n^j < 1$ holds eventually. 
We obtain the following decision process for given threshold parameters $\epsilon_0,\epsilon_1$:
\begin{enumerate}[(i)]
    \item if $L_n^j > r$ for all large enough $n$, we conclude $\lim_{n \to \infty} \rho^j_n = 0$;
    \item if $L_n^j < r^{-1}$ and $\rho^j_n > \epsilon_1$ for all large enough $n$, we conclude $\lim_{n \to \infty} \rho^j_n >0$;
    \item if $L_n^j < r^{-1}$ and $\rho^j_n < \epsilon_1$ for all large enough $n$, the algorithm returns no result;
    \item if $L_n^j$ oscillates around $1$, the algorithm returns no result.
\end{enumerate}
\Cref{fig:radia-sq3-bigkerr} illustrates this process.
By checking only the thresholds $\rho^j_n < \epsilon_0$ for all large enough $n$, we would have concluded that $\lim_{n \to \infty} \rho^j_n = 0$, since the $\rho^j_n$ are of order $10^{-26}$.
The sufficient criterion $L_n^j>1+\epsilon_0$ for convergence to $0$ fails, however; instead, $L_n^j < 1$ holds for all large enough $n$, indicating that $\lim_{n \to \infty} \rho^j_n > \epsilon_1$ is possible.
Indeed, inspecting \cref{fig:radia-sq3-bigkerr} closely, we observe that all radii appear to converge to a very small but finite limiting value, in agreement with the analytical results.

The Raabe criterion is therefore mainly useful in avoiding the false conclusion that $\lim_{n \to \infty} \rho^j_n =0$:
if the $\rho^j_n$ are very small but $L_n^j < 1$ holds, the numerical results warrant a closer look.

\subsection{Choosing the sampling size}\label{sec:depth}
\begin{figure}
    \centering
    \includegraphics[width=0.9\linewidth]{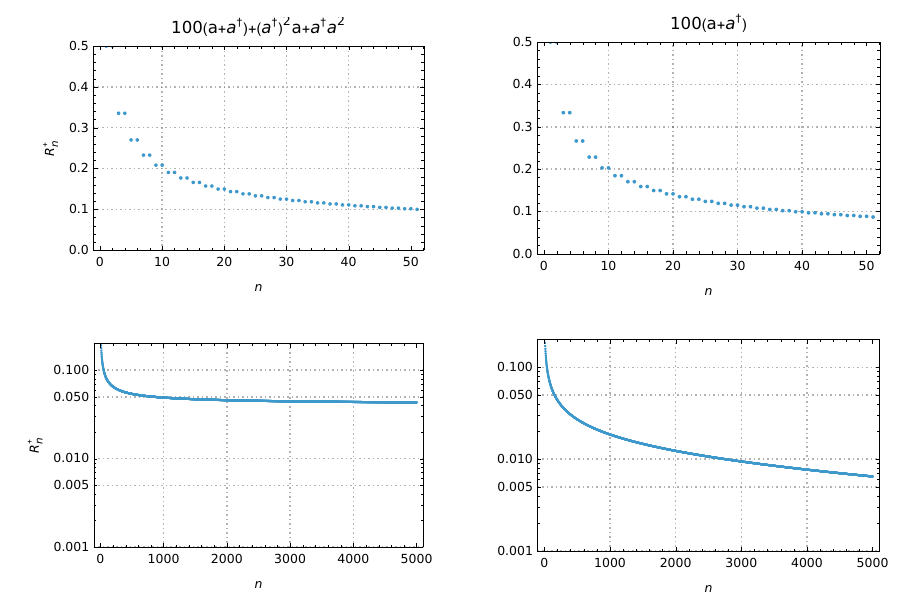}
    \caption{Radii $R_n^+$ associated with the operators $100(a+a^\dagger) + (a^\dagger)^2 a + a^\dagger a^2$ (left) and $100(a^\dagger a)$ (right) for $n \leq 50$ (top) and $n \leq 5000$ (bottom). For small $n$, the term $100(a+a^\dagger)$ dominates the behaviour of the radii, and only for big $n$ the asymptotic behaviour of the coefficients becomes visible.}
    \label{fig:radia-sq21-length}
\end{figure}

This subsection justifies \cref{step:depth} of \cref{sec:algorithm_structure}.
To illustrate the next shortcoming of the naive approach, we consider the polynomial
\begin{equation}
    B_\eta= \eta(a+a^\dagger) + (a^\dagger)^2 a + a^\dagger a^2 \, .
\end{equation}
By \cref{prop:decomposition}, $B_\eta$ is unitarily equivalent to a block Jacobi operator with $d=1$---that is, to a scalar Jacobi operator---with off-diagonal entries
\begin{equation}
    a_n = \eta\sqrt{n+1} + \sqrt{(n+1) n^2} \sim n^{3/2} \, ,
\end{equation}
and, by \cite{fischer-essentiallysingularlimits-2026}, the deficiency indices of $B_\eta$ are $(1,1)$ for all $\eta \in \rnum$.
It follows that the radii $R^\pm_n$ converge to finite numbers $R^\pm > 0$ as $n \to \infty$.

For large $\eta$ and small $n$, however, the term $\eta(a+a^\dagger)$ dominates, and $a_n$ scales like $\eta n^{1/2}$.
Over that range the radii of $B_\eta$ are indistinguishable from those of $C_\eta = \eta(a+a^\dagger)$, as \cref{fig:radia-sq21-length} shows, and the Raabe criterion of \cref{sec:raabe} does no better.
Since $C_\eta$ has deficiency indices $(0,0)$ while $B_\eta$ does not, stopping the iteration at such small $n$ would return the wrong verdict for
$B_\eta$.

Once $n$ is large enough for $\sqrt{(n+1)n^2}$ to dominate $\eta\sqrt{n+1}$, the radii of $B_\eta$ decay more slowly and settle at $R^\pm>0$, while those of $C_\eta$ keep decreasing, and the correct results $(1,1)$ and $(0,0)$ are obtained.
The sampling size $N$ must therefore be large enough for the coefficients to have reached their asymptotic form---here $n \gg \eta$---before the radii are interpreted at all.
The implementation of the algorithm automatically chooses $N$ large enough that all coefficients
are close to their asymptotic expansions.

\subsection{Gaussian operators}\label{sec:carleman-ops}
\begin{figure}
    \centering
    \includegraphics[width=0.9\linewidth]{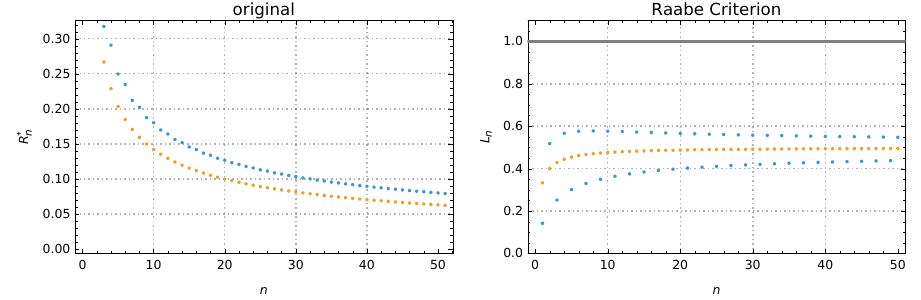}
    \caption{Eigenvalues $\rho^j_n$ of the radius operators $R_n^+$ associated with the squeezing operator $(a^\dagger)^2 + a^2$ (left) and the transformed sequence $L_n^j = n\left(1-\rho^j_{n+1}/\rho^j_n\right)$ (right). 
    As both bands converge to $0$ subalgebraically, the Raabe criterion fails to detect this slow convergence.}
    \label{fig:sq2plot}
\end{figure}
This subsection justifies \cref{step:shortcut} of \cref{sec:algorithm_structure}.
Gaussian operators, which are known to have deficiency indices $(0,0)$ by the Carleman criterion (cf. \Cref{prop:carleman}), are best treated separately.
We consider the usual second-order squeezing operator $A_2 = (a^\dagger)^2 + a^2$, and plot the eigenvalues of the associated radius matrices $R_n^+$ in \cref{fig:sq2plot}.
Since the deficiency indices of $A_2$ are $(0,0)$, the radii must decay to $0$; \Cref{fig:sq2plot} shows, however, that the decay is very slow.
Indeed, one can prove without difficulty that the radii $\rho^j_n$ decay as $O(\log(n)^{-1})$, which in particular implies that the series $\sum_{n = 0}^\infty \rho^j_n$ diverges.
The Raabe criterion therefore fails to detect this slow convergence to $0$, as $L_n^j < 1$ for all $n$ shown in \cref{fig:sq2plot}.
In particular, this demonstrates that $L_n^j<1$ is a necessary but not a sufficient condition for $\lim_{n \to \infty} \rho^j_n > 0$.
Detecting such slow convergence of the $\rho^j_n$ to $0$ is therefore delicate.
In the case of the second-order squeezing operator $A_2$, there is a simple remedy:
we can simply use the analytical result for the deficiency indices.

Numerical evidence suggests that this problem of sub-algebraic (here, $(\log n)^{-1}$) convergence to $0$ arises only for Gaussian operators, which are known always to have deficiency indices $(0,0)$.
In particular, 
across all the
higher-order polynomials examined in this work, the radii either converge to a
non-zero limit or decay algebraically, and in neither case does the criterion of
\cref{sec:raabe} fail.
We therefore conjecture that the phenomenon is specific to degree two.
This costs nothing in practice: Gaussian operators are recognized from $M$
before the iteration begins, and their deficiency indices are known to vanish
by \cref{prop:carleman}, so the algorithm never reaches the regime in which the
criterion would be unreliable.

\subsection{Using the package}\label{sec:implementation}

\begin{table}[t]
    \centering
    \begin{tabular}{l l}
        \toprule
         Function & Returns  \\
         \toprule
         \verb|QuickDeficiencyIndices[poly_]| & Deficiency indices of \verb|poly| with user-friendly output\\
         \verb|DeficiencyPlot[poly_,z_,n_]| & Radii associated with \verb|poly| up to \verb|n|\\
         \verb|RadiusOperators[poly_,z_,n_]| & Radius matrices associated with \verb|poly| up to \verb|n| \\
         \verb|AdaptiveDeficiencyIndices[poly_]| & Deficiency indices of \verb|poly| with metadata\\
    \end{tabular}
    \caption{
    Most important functions in the Mathematica package~\cite{notebook}.
    In all cases, \texttt{poly\_} is a polynomial in $a$ and $a^\dagger$, expressed as a  non-commutative product in the variables \texttt{a} and \texttt{adag} or \texttt{x},\texttt{q} and \texttt{p}, \texttt{z\_} a complex number, and \texttt{n\_} an integer.
    }
    \label{tab:functions}
\end{table}
We briefly comment on the implementation of the algorithm in Mathematica~\cite{Mathematica} and on some sensible default values.
The package, together with a notebook reproducing every example and figure of this work, is available at~\cite{notebook}; the relevant functions are summarized in \cref{tab:functions}.

The package uses Mathematica's \verb|WeylAlgebra| to implement the bosonic algebra, denoting $a$ and $a^\dagger$ by \verb|a| and \verb|adag|.
The relevant modules accept input operator polynomials in the annihilation and creation operators \verb|a| and \verb|adag| as well as in momentum \verb|p| and position \verb|x| (or alternatively \verb|q|), where Mathematica's  \verb|NonCommutativeMultiply| acts as the product.

We comment briefly on the modules most important to the user, as listed in \cref{tab:functions}.
The package also includes some quality of life functions for working with bosonic operators and block Jacobi operators.

\verb|RadiusOperators[poly_,z_, n_]| with $z = \pm \iu$ returns the radius operators $R_l^\pm$ associated with \verb|poly| for all $l \leq n$, and \verb|DeficiencyPlot[poly_,z_, n_]| returns their eigenvalues directly, ready to be plotted.
Both functions have the following options: \verb|Precision| lets the user adjust the working floating point precision, in particular it can be increased whenever Mathematica emits warnings on lacking stability of matrix inversion. 
Moreover, using the option \verb|StartIndex| one can ignore the first (potentially non-invertible) block $A_n$ of the block Jacobi operator associated with \verb|poly|, as described in \cref{lem:full-decomposition}.
Finally, the two functions warn the user whenever the rescaling of the operator (cf. \Cref{sec:rescaling}) or rescaling of the algorithm depth (cf. \Cref{sec:depth}) is recommended, and the rescaling can be automatically applied by setting \verb|AutoScale| to \verb|True|.
These warnings can be disable using \verb|Warnings -> False|.
A sensible initial sampling size for visually inspecting convergence is $n=50$.

The most important module is \verb|AdaptiveDeficiencyIndices[poly_]|, which calculates the deficiency indices of \verb|poly| automatically, returning them together with metadata about the convergence criteria.
It follows the convergence tests outlined in \cref{sec:algorithm_structure}.
Since numerical calculations are inherently imprecise, we employ a number of tests to detect when such imprecision significantly affects the results:
\begin{enumerate}[(i)]
    \item Mathematica recognizes badly conditioned matrices while calculating the $R_n^\pm$;
    \item the eigenvalues $\rho^j_n$ have a non-negligible imaginary part;
    \item the eigenvalues $\rho^j_n$ are not monotonically decreasing.
\end{enumerate}
If at least one of these problems occurs in an iteration, the internal precision of Mathematica is increased, which in particular requires the input coefficient matrix $M$ to be given with sufficient precision.
Ideally, the coefficients of the input polynomial \verb|poly| are therefore given as exact numbers and not in floating point representation, so that Mathematica can evaluate them with arbitrarily high precision.
Since the algorithm need not terminate (for instance if $\epsilon_1 > \lim_{n \to \infty} \rho^j_n > 0$), the implementation imposes a maximum number of iterations and a maximum working precision.
If this maximum is reached and the algorithm has still not returned a definite result, the user is advised to inspect the plots, such as those in \cref{fig:radia-sq3-bigkerr}, and to adjust the hyperparameters accordingly.
The most important options of  \verb|AdaptiveDeficiencyIndices[poly_]| are the following:
\verb|Epsilon0|, \verb|Epsilon1| and \verb|RaabeRatio| are the thresholds $\epsilon_0$, $\epsilon_1$ and $r$ from \cref{sec:algorithm_structure}.
The default values $\epsilon_0 = 10^{-3}$, $\epsilon_1 = 10^{-2}$ and $r=2$ proved to be reliable choices. 
Moreover, \verb|ConvergenceTest| is the number $N_c$ of trailing eigenvalues on which the convergence tests are applied, with a default value of $25$.
Finally, \verb|DebugPrint->True| allows the user to inspect the convergence status of each iteration of the algorithm. 
The error codes returned by \verb|AdaptiveDeficiencyIndices[poly_]| are documented in the module. 

Finally, \verb|QuickDeficiencyIndices[poly_]| is a wrapper for \verb|AdaptiveDeficiencyIndices[poly_]| which provides simplified, color-coded output, and the module which is recommended for the end user.
We refer to the notebook~\cite{notebook} for detailed examples.

\end{document}